\title{Rotated Mean-Field Variational Inference \\ and Iterative Gaussianization}
\author[1]{Yifan Chen}
\author[2]{Sifan Liu}
\affil[1]{Department of Mathematics, University of California, Los Angeles}
\affil[2]{Department of Statistical Science, Duke University}
\date{July 2026}
\begin{document}
\maketitle

\begin{abstract}
We propose an iterative Gaussianization method for sampling from unnormalized densities by repeatedly applying mean-field variational inference (MFVI) in rotated coordinate systems. At each iteration, the method selects a rotation, solves an MFVI subproblem in the rotated coordinates, and applies the inverse coordinatewise map to transform the current target closer to the standard Gaussian. The resulting algorithm provides a computationally efficient way to construct flow-like transport maps: it requires only MFVI subproblems, avoids large-scale optimization, and produces transformations that are easy to invert and evaluate.

The effectiveness of the procedure depends on selecting informative rotations. We develop an efficient PCA-type method that chooses rotations from the leading eigenvectors of a cross-covariance matrix involving the target's score function. Experiments on Bayesian posterior sampling tasks show that performing MFVI in the proposed PCA-rotated coordinate systems substantially improves over standard MFVI, and that the resulting iterative Gaussianization procedure provides accurate flow-like approximations at lower computational cost than \sloppy{conventional} normalizing-flow variational approximations.

\end{abstract}

\section{Introduction}

Sampling from an unnormalized density is a fundamental problem in statistics, with applications in Bayesian inference, inverse problems, statistical mechanics, and many others. Markov chain Monte Carlo (MCMC; \citep{metropolis1953equation,hastings1970monte}) provides a general-purpose framework for this task, but it can become computationally expensive when applied to large datasets and often suffers from slow mixing in high dimensions. The difficulties are compounded when the target distribution exhibits strong correlations, variable curvature, or multiple modes—features that frequently arise in hierarchical models and demand sophisticated algorithmic adaptation. Additionally, while ensemble methods enable some degree of parallelization, the sequential dependence inherent in Markov chains limits their scalability on modern hardware.

Variational inference (VI; \citep{jordan1999introduction,blei2017variational}) offers a scalable alternative by recasting sampling as an approximation problem: the goal is to find a distribution within a pre-specified variational family that is closest to the target under a divergence measure. A widely used variational family is the mean-field family, which approximates the target with a product distribution. Mean-field variational inference (MFVI) is conceptually simple and can be implemented efficiently using coordinate ascent algorithms (CAVI; \citep{bishop2006pattern}). MFVI has been extensively studied, including its consistency and convergence rates in statistical estimation~\citep{bickel2013asymptotic,zhang2020convergence}, applications to model selection~\citep{zhang2024bayesian}, algorithmic convergence of CAVI~\citep{arnese2024convergence,lavenant2024convergence,bhattacharya2025convergence}, alternative algorithms~\citep{tran2023particle,du2024particle,jiang2025algorithms}, and theoretical properties for log-concave targets~\citep{lacker2024mean}. Despite its computational appeal and theoretical understanding, MFVI is known to underestimate marginal variances and to yield over-confident uncertainty quantification, as it cannot capture dependencies among coordinates~\citep{neville2014mean}.

Beyond MFVI, richer variational families can be constructed to capture correlations, such as by using multivariate Gaussian or mixture distributions~\citep{opper2009variational,lin2019fast,che2025stable}. A more flexible approach, inspired by generative modeling, parametrizes the variational family as the pushforward of a simple reference distribution (e.g., a standard Gaussian) through an expressive class of transport maps. Normalizing flows~\citep{papamakarios2021normalizing} offer a general framework for this construction by composing multiple simple, invertible transformations. With sufficient capacity, flows achieve leading performance in density estimation and generative modeling. For sampling tasks, recent empirical studies show that flows can approximate complex targets with high accuracy, but only under demanding conditions: the model must be highly expressive, optimization requires very large Monte Carlo batches (e.g., $2^{19}$ samples), and the learning rate must be carefully tuned~\citep{rezende2015variational,blessing2024beyond,agrawal2025disentangling}. While normalizing flows offer a promising approach to sampling, the training cost remains prohibitively high for many practical applications.

Motivated by the tractability of MFVI and the compositional design of normalizing flows, we propose a simple but effective strategy: construct expressive transport maps by iteratively performing MFVI in different coordinate systems. The key insight is that while MFVI in any fixed coordinate system produces only a product distribution, alternating between MFVI steps and orthogonal rotations can capture complex dependencies. This raises a natural question: \emph{What is the best coordinate system in which to perform MFVI?}

Directly optimizing the rotation is computationally infeasible, since
each candidate rotation would require solving a full MFVI problem, and
the outer optimization would require searching over the orthogonal
group.
We therefore study a local analogue of the same question: starting from
the standard Gaussian, how fast can the KL divergence be decreased by an
infinitesimal coordinatewise transport in a given coordinate system?
This instantaneous analysis shows that the steepest KL decrease
rate is given by the \emph{projected Fisher information} of the target
relative to the standard Gaussian~\citep{lacker2023independent}, making it a natural criterion for
selecting rotations.
We then derive an ANOVA-type decomposition of the projected Fisher
information based on orthogonal Hermite polynomials.
The leading-order contribution in this decomposition is maximized by
aligning the rotation with the eigenvectors of the matrix
\begin{align*}
    H
    =
    \EE[\bfx\sim\N(0,I_d)]
    {
        \bfx
        \big(
            \nabla\log p(\bfx) + \bfx
        \big)\tran
    } ,
\end{align*}
which is the cross-covariance between
$\bfx\sim\N(0,I_d)$ and the relative score
$\nabla\log p(\bfx)+\bfx$.
We call the resulting rotation-selection method
\emph{relative score PCA}.

Performing MFVI in the coordinate system selected by relative score PCA produces a transport map that combines a rotation with a coordinatewise transformation.
Together, these operations transform the target closer to the standard Gaussian. Repeating this procedure gives an iterative scheme in which each step consists of selecting a rotation by relative score PCA and then applying MFVI in the rotated coordinates. We refer to this algorithm as \emph{iterative Gaussianization}, since it parallels Gaussianization methods for density estimation~\citep{chen2000gaussianization}, which alternate between rotations and marginal Gaussianization. In our setting, however, the target is available through its unnormalized density and score function rather than through exact samples, and the rotation is chosen by a score-based PCA criterion tailored to MFVI.

Performing MFVI on a rotated target can also be viewed as a form of \sloppy{\emph{model reparametrization}}, a long-standing strategy in Bayesian computation for improving the efficiency of inference algorithms. Classical examples include centered and non-centered parametrizations in hierarchical models~\citep{papaspiliopoulos2007general}. Recent work has extended this principle to variational inference, where suitable reparametrizations can substantially improve the quality of variational approximations~\citep{tan2021use}. In contrast to these model-specific approaches, the proposed rotation method provides a model-agnostic, score-based reparametrization. Relative score PCA is also related to PCA-based methods such as active subspaces~\citep{constantine2014active} and certified dimension reduction~\citep{zahm2022certified}, which seek informative low-dimensional subspaces for function approximation or likelihood-informed posterior approximation. Our objective is different: we seek rotations that make coordinatewise MFVI more effective; see Section~\ref{sec-comparison-to-active-subspace} for a detailed comparison.

In summary, the main contributions of this work are as follows:
\begin{enumerate}
    \item We introduce an iterative Gaussianization framework that alternates between rotations and MFVI updates, progressively transforming the target distribution toward the standard Gaussian. The inverse transformations define a flow-like variational approximation that is easy to sample from, invert, and evaluate, while avoiding the joint high-dimensional optimization required by conventional normalizing flows.

    \item We develop relative score PCA as an efficient rotation-selection method for enhancing MFVI. It requires only samples from the standard Gaussian and evaluations of the target score, and it adds little computational overhead to standard MFVI. In addition, for rotated product targets, relative score PCA recovers the optimal rotation when its associated eigenvalues are distinct (Proposition~\ref{prop: rotated product exact recovery}); in this case, the subsequent MFVI approximation is exact.

    \item Our theoretical analysis establishes a characterization of Gaussianity: if a distribution admits the standard Gaussian as its optimal mean-field approximation under any rotation, then the distribution itself must be standard Gaussian. This result identifies the standard Gaussian as the unique stationary point of the proposed iterative algorithm and provides a characterization of mean-field optimality under rotations that may be of independent interest.
\end{enumerate}

The remainder of the paper is organized as follows.
Section~\ref{sec: mfvi background} formulates MFVI as a
coordinatewise transport problem and introduces iterative
Gaussianization based on rotated MFVI.
Section~\ref{sec: stationarity guarantee} establishes a stationarity
guarantee for this iterative procedure.
Section~\ref{sec: instantaneous} shows that the instantaneous KL
decrease rate under coordinatewise transports is given by the projected
Fisher information, and develops a constrained and weighted Wasserstein gradient flow
interpretation of iterative Gaussianization.
Section~\ref{sec: pca} derives relative score PCA from an orthogonal
decomposition of projected Fisher information and connects it to Stein
discrepancy and other gradient-based PCA methods.
Section~\ref{sec: experiments} presents numerical experiments, and
Section~\ref{sec: conclusion} concludes.

\section{MFVI and iterative Gaussianization}
\label{sec: mfvi background}

Let $p$ denote the target distribution supported on $\R^d$, with Lebesgue density $p(\bfx)\propto \exp(-U(\bfx))$ assumed to be continuously differentiable and known up to the normalizing constant. For simplicity, and with a slight abuse of notation, we use the same symbol for a distribution and its density, and do not distinguish the two when the meaning is clear from context. Let $\pprod$ be the set of all product measures on $\R^d$ whose densities are positive and continuously differentiable. Define $\calF$ as the set of coordinatewise $C^2$-diffeomorphisms on $\R^d$, i.e. {$\calF=\{F: \R^d \to \R^d \mid F(\bfx) = (F_1(x_1), \ldots, F_d(x_d)), F_i:\R\to\R\ C^2\text{-diffeomorphism}\}$}. For a distribution $p$ and diffeomorphism $F$ on $\R^d$, the pushforward $F\# p$ denotes the distribution of $F(X)$ when $X\sim p$. Let $\gamma=\N(0,I_d)$ denote the standard Gaussian distribution on $\R^d$.

\subsection{Mean-field variational inference}

MFVI approximates the target $p$ with a product distribution that minimizes the KL divergence:
\begin{align}\label{equ: mfvi}
    q^* = \underset{q\in\pprod}{\argmin}\; \kl{q}{p},
\end{align}
where the Kullback--Leibler (KL) divergence is defined as $\kl{q}{p} = \EE[q] {\log \frac{q(\bfx)}{p(\bfx)}}$.
The solution $q^*$ of problem~\eqref{equ: mfvi} satisfies the first-order optimality condition
\begin{align}\label{equ: mfvi opt cond}
    \nabla_{i} \log q^*_i(x_i) = \EE[q^*_{-i}]{\nabla_{i} \log p(\bfx)}, \quad i=1,\ldots,d,
\end{align}
where $\bbE_{q^*_{-i}}$ denotes expectation under $q^*$ conditioned on $x_i$ and $\nabla_i$
denotes differentiation with respect to the $i$-th coordinate. For a general target this
identity is understood in a weak sense; see~\citep[Equation 4.7]{arnese2024convergence} for
a derivation assuming only that $q^*$ has a finite $m$-th moment ($m\geq2$) and
$|\log p(\bfx)|\leq c(1+\|\bfx\|_2^m)$, $\|\nabla\log p(\bfx)\|_2\leq c(1+\|\bfx\|_2^m)$ for
some $c>0$ and almost every $\bfx\in\R^d$. We assume these growth conditions throughout.
Together with our standing assumptions that $p\in C^1$ and $\pprod$ consists of positive $C^1$ product densities, \eqref{equ: mfvi opt cond} holds pointwise for every $x_i$.

In general, the solution of the mean-field equation~\eqref{equ: mfvi opt cond} need not be unique. If the potential function $U$ is convex, all solutions of~\eqref{equ: mfvi opt cond} are global minimizers of the MFVI problem~\eqref{equ: mfvi}. If $U$ is strongly convex, the global minimizer is unique~\citep[Theorem 1.1, Proposition 3.9]{lacker2024mean}. For the theoretical results in this paper, we assume that each MFVI subproblem has a finite attained minimizer; when minimizers are not unique, we fix a particular choice. In the numerical experiments, this exact update is approximated by parametrizing the coordinatewise diffeomorphism $F$ as monotone splines and optimizing their parameters with Adam; see Section~\ref{sec: experiments} for details.

The MFVI problem~\eqref{equ: mfvi} can equivalently be formulated as finding the best coordinatewise map that pushes the standard Gaussian $\gamma$ forward to $p$:
\begin{align*}
    F^*=\underset{F\in\calF}{\argmin}\; \kl{F\# \gamma}{p}.
\end{align*}
In fact the two formulations range over the same set of distributions. Given $q\in\pprod$
with marginal cumulative distribution functions (CDFs) $Q_1,\dots,Q_d$, the coordinatewise map $F$ with $F_i=Q_i^{-1}\circ\Phi$
(where $\Phi$ is the standard Gaussian CDF) belongs to $\calF$ and satisfies $F\#\gamma=q$;
here $F_i\in C^2$ precisely because $Q_i'=q_i$ is $C^1$ and positive. Conversely, for any
$F\in\calF$ the pushforward $F\#\gamma$ is a product measure with positive $C^1$ density, and
hence lies in $\pprod$. 

Once $F^*$ is obtained, we can Gaussianize the target by defining $p^* := F^{*-1}\# p$, which is closer to the standard Gaussian than the original target $p$, since
\begin{align*}
    \kl{\gamma}{p^*} = \kl{F^* \# \gamma}{p}\leq \kl{\gamma}{p}.
\end{align*}
The equality follows from the invariance of KL divergence under invertible transformations, and the inequality follows from the optimality of $F^*$.

\subsection{Iterative Gaussianization via rotated MFVI}
\label{sec: mfvi rotated}
Performing MFVI in the standard coordinate system can lead to poor approximation when the target distribution exhibits strong correlations among variables.
To introduce dependence among variables, we can perform MFVI in a rotated coordinate system. Given an orthogonal matrix $R\in\R^{d\times d}$ with $RR\tran=I_d$, we first rotate the target distribution to $p_R=R\# p$, whose density is $p_R(\bfx) = p(R\tran \bfx)$, and then perform MFVI for the rotated target. Specifically, we solve 
\begin{align*}
    F^* = \underset{F\in\calF}{\argmin}\; \kl{F\# \gamma}{p_R}.
\end{align*}
As before, once $F^*$ is found, the target can be transformed to $p_R^* := F^{*-1}\# p_R$, which is closer to the standard Gaussian than the original target $p$, since
\begin{align*}
    \kl{\gamma}{p_R^*} \leq \kl{\gamma}{p_R}=\kl{\gamma}{p},
\end{align*}
where the equality holds because KL divergence is invariant under orthogonal transformations and the standard Gaussian distribution $\gamma$ is rotationally invariant. 

Although performing MFVI in a rotated coordinate system captures some dependence among coordinates, its expressiveness remains limited: the resulting approximation is still a product distribution in a single rotated coordinate system. To obtain a richer approximation, we iterate the rotation-and-MFVI step, treating each Gaussianized output as the new target and progressively bringing it closer to the standard Gaussian.

Formally, let $p=p^{(0)}$ denote the target, and let $p^{(k-1)}$ denote the target after $k-1$ iterations ($k\geq1$). In the $k$-th iteration, we first choose an orthogonal matrix $R_k\in\R^{d\times d}$ and form the rotated target $p^{(k-1)}_{R_k} = R_k \# p^{(k-1)}$. We then solve the MFVI problem
\begin{align}\label{equ: kl decrease condition}
    F_k = \underset{F\in\calF}{\argmin}\; \kl{F \# \gamma}{p^{(k-1)}_{R_k}} .
\end{align}
Finally, we apply the inverse transformation to Gaussianize the rotated target, yielding the new target
$p^{(k)} = F_k^{-1} \# p^{(k-1)}_{R_k}$. 
This iterative procedure, which we call \emph{iterative Gaussianization},
is summarized in Algorithm~\ref{algo: iterative gaussianization}.

\begin{algorithm}
\caption{Iterative Gaussianization}
\label{algo: iterative gaussianization}
\begin{algorithmic}
    \REQUIRE {Target distribution $p$; number of iterations $K$}
    \STATE {Initialize $p^{(0)}=p$}
    \FOR {$k=1$ to $K$}
        \STATE {Select an orthogonal matrix $R_k$ and define $p_{R_k}^{(k-1)}= R_k\# p^{(k-1)}$} 
        \vspace{.4em}
        \STATE {Solve for $F_k=\underset{F\in\calF}{\argmin}\; \kl{F \# \gamma}{p_{R_k}^{(k-1)}}$}
        \vspace{.4em}
        \STATE {Update $p^{(k)}=F_k^{-1} \# p_{R_k}^{(k-1)} $}
    \ENDFOR
    \RETURN {Gaussianization transformation $ F_K^{-1}\circ R_K \circ \cdots \circ F_1^{-1}\circ R_1$}
\end{algorithmic}
\end{algorithm}

After $k$ iterations, the original target $p$ is transformed to
\begin{align*}
    p^{(k)} = (F_k^{-1} \circ R_k \circ \cdots \circ F_1^{-1} \circ R_1) \# p.
\end{align*}
Compositions of rotations and coordinatewise maps are known to be universal approximators in the density estimation setting \citep{chen2000gaussianization}: there exist sequences of rotations $R_k$ and coordinatewise maps $F_k$ such that $p^{(k)}$ converges weakly to $\gamma$ as $k\to\infty$. Thus, iterative Gaussianization has the expressiveness needed to approximate general targets, even though each iteration solves only a mean-field subproblem. The composite transformation also resembles a normalizing flow \citep{papamakarios2021normalizing}. The key difference is that our construction is built sequentially from MFVI subproblems, whereas normalizing flows typically optimize the full composition jointly.

The inverse of the sequence of transformations pushes the standard Gaussian toward the target distribution, resulting in the approximation
\begin{align}\label{equ: qk}
    q^{(k)} = (R_1\tran \circ F_1 \circ \ldots \circ R_k\tran \circ F_k) \# \gamma.
\end{align}
This can be used to generate approximate samples from the target distribution by pushing forward samples from the standard Gaussian. The Jacobian determinant of this transformation is the product of the determinants of each coordinatewise map, since rotations have unit determinant. Therefore, the density of $q^{(k)}$ can be evaluated efficiently, which can be used in a subsequent step to correct for the bias in $q^{(k)}$ through importance sampling or MCMC.

By construction, each iteration decreases the KL divergence to the
standard Gaussian, regardless of how the rotation $R_k$ is chosen.
This monotonicity is stated in the following proposition.
\begin{prop}
    \label{lem: monotonicity}
    The KL divergence is non-increasing in $k$:
    \begin{align*}
        &\kl{\gamma}{p^{(k)}} \leq \kl{\gamma}{p^{(k-1)}},\quad \kl{q^{(k)}}{p} \leq \kl{q^{(k-1)}}{p},\quad \forall\, k\geq 1.
    \end{align*}
\end{prop}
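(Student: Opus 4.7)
The plan is to prove the first inequality directly from the construction of $p^{(k)}$, and then observe that the second inequality reduces to the first via invariance of KL divergence under invertible transformations.

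For the first inequality, I would start from the definition $p^{(k)} = F_k^{*-1} \# p^{(k-1)}_{R_k}$ and apply the invariance of KL divergence under the invertible map $F_k^*$ to rewrite
\begin{align*}
    \kl{\gamma}{p^{(k)}} = \kl{F_k^* \# \gamma}{F_k^* \# p^{(k)}} = \kl{F_k^* \# \gamma}{p^{(k-1)}_{R_k}}.
\end{align*}
The optimality of $F_k^*$ in the MFVI subproblem, compared against the identity map (which lies in $\calF$), gives
\begin{align*}
    \kl{F_k^* \# \gamma}{p^{(k-1)}_{R_k}} \leq \kl{\gamma}{p^{(k-1)}_{R_k}}.
\end{align*}
Finally, since $p^{(k-1)}_{R_k} = R_k \# p^{(k-1)}$ and $\gamma$ is rotationally invariant (so $R_k \# \gamma = \gamma$), another application of invariance gives $\kl{\gamma}{p^{(k-1)}_{R_k}} = \kl{R_k \# \gamma}{R_k \# p^{(k-1)}} = \kl{\gamma}{p^{(k-1)}}$. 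Chaining these yields the first inequality.

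For the second inequality, the key observation is that $q^{(k)}$ and $p^{(k)}$ are related by the same invertible transformation. Writing $T_k = R_1\tran \circ F_1^* \circ \cdots \circ R_k\tran \circ F_k^*$, we have $q^{(k)} = T_k \# \gamma$ by~\eqref{equ: qk}, and by inverting the cumulative transformation that defines $p^{(k)}$, we also have $p = T_k \# p^{(k)}$. Invariance of KL divergence therefore gives
\begin{align*}
    \kl{q^{(k)}}{p} = \kl{T_k \# \gamma}{T_k \# p^{(k)}} = \kl{\gamma}{p^{(k)}},
\end{align*}
and analogously $\kl{q^{(k-1)}}{p} = \kl{\gamma}{p^{(k-1)}}$, so the second inequality follows directly from the first.

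There is no real obstacle here: the argument is a short chain of equalities plus one inequality from the optimality of $F_k^*$. The only subtlety worth flagging explicitly is that the reduction of the second inequality to the first requires identifying the inverse of the Gaussianization transformation $S_k = F_k^{*-1} \circ R_k \circ \cdots \circ F_1^{*-1} \circ R_1$ with the sampling transformation $T_k$ in~\eqref{equ: qk}, which is immediate from the definitions but worth stating so that the invariance argument is unambiguous.
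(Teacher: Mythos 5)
Your proof is correct and follows essentially the same argument that the paper sketches in Section~\ref{sec: mfvi rotated} (the paper states the proposition without a separate proof, treating it as immediate from that discussion). You correctly chain the invariance of KL under the invertible map $F_k^*$, the optimality of $F_k^*$ against the identity map in $\calF$, and the rotational invariance of $\gamma$; and you correctly identify $T_k = S_k^{-1}$ to reduce the second inequality to the first.
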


For Gaussian targets, we have a more explicit characterization of the contraction rate of the KL divergence, under random rotations. 
The proof of the following theorem, given in Appendix~\ref{sec: proof of gaussian convergence}, relies on the fact that the optimal mean-field approximation of each iterate $p^{(k)}$ ($k\geq1$) is $\gamma$ by construction.
\begin{thm}\label{thm: gaussian convergence}
    If the target distribution $p$ is Gaussian, then all iterates $p^{(k)}$ are Gaussian. Suppose that the condition number of the covariance matrix of $p^{(k)}$ ($k\geq1$) is $\chi$. Then
    \begin{align*}
        \EE{\kl{\gamma}{p^{(k+1)}} } \leq \Big(1-\frac{2}{(d+2)\chi^2}\Big) \cdot \kl{\gamma}{p^{(k)}},
    \end{align*}
    where the expectation on the left-hand side is conditional on the current iterate $p^{(k)}$ and taken over the rotation $R_{k+1}$, chosen uniformly at random from the orthogonal group $O(d)$.
\end{thm}

\subsection{Related work}
\label{sec: related work}

\paragraph{Gaussianization for density estimation}
Gaussianization methods were originally developed for density estimation, where the goal is to iteratively transform data toward a Gaussian distribution. Each iteration alternates between rotations, which reduce dependence among coordinates, and marginal Gaussianization, which transforms each marginal to a standard Gaussian. In its original form~\citep{chen2000gaussianization}, rotations are determined by independent component analysis (ICA), which does not have closed-form solutions and can be slow to compute in high dimensions. An alternative is to choose the rotation matrices by PCA, but the iterative algorithm may get stuck in local optima~\citep{laparra2011iterative}. \citet{laparra2011iterative} propose to use random rotations, which makes each iteration faster to compute, but may require more iterations to converge. From the perspective of continuous-time particle flows, a related density estimator is developed by~\citep{tabak2010density}, which constructs the Gaussianization map by alternating between random rotations and simple transformations that are either coordinatewise maps or radial expansions.
Gaussianization flow~\citep{meng2020gaussianization} instead treats the
rotation matrices as trainable parameters, parametrized as products of
Householder transformations.
The flow is trained as a normalizing flow, with rotations and
componentwise transformations optimized jointly.
The authors also extend the universal approximation theorem of
\cite{chen2000gaussianization} to their specific parametrization.

\paragraph{Normalizing flows}

Normalizing flows define the variational family as the pushforward of a simple reference distribution, typically $\gamma=\N(0,I_d)$, through a diffeomorphism $\tau_\theta$ parametrized by $\theta\in\Theta$. The pushforward measure is denoted $q_\theta := \tau_\theta \# \gamma$. By the change of variable formula, the density of $q_\theta$ is given by
$q_\theta(\bfx) = \gamma(\tau_\theta^{-1}(\bfx)) \cdot |\nabla \tau_\theta^{-1}(\bfx)|$,
where $\tau_\theta^{-1}$ is the inverse of the transport map $\tau_\theta$ and $|\nabla \tau_\theta^{-1}(\bfx)|$ is the absolute value of the determinant of the Jacobian of $\tau_\theta^{-1}$ at $\bfx$. The KL divergence between $q_\theta$ and $p$ can be expressed as
\begin{align*}
    \kl{q_\theta}{p} = \EE[\gamma]{\log \gamma(\bfx ) - \log|\nabla\tau_\theta (\bfx) | -\log p(\tau_\theta(\bfx) )  }.
\end{align*}
In practice, the KL divergence is optimized via gradient-based optimization, where the gradient is estimated by Monte Carlo samples from $\gamma$.

Because evaluating the log-determinant term $\log|\nabla \tau_\theta|$ can be computationally expensive, transport maps are often parameterized so that the Jacobian is lower triangular. For example, polynomial basis expansions have been used to directly approximate the Knothe--Rosenblatt rearrangement~\citep{el2012bayesian}, but the number of parameters grows rapidly with dimension. Normalizing flows construct the map as a composition of simple transformations, each with a triangular Jacobian. Popular examples include autoregressive flows~\citep{kingma2016improved}, RealNVP~\citep{dinh2017density}, and neural spline flows~\citep{durkan2019neural}; see~\citep{papamakarios2021normalizing} for a survey. However, the structural constraints of these flows limit their expressiveness, often requiring many layers to capture complex distributions. Overparameterization can also make training unstable, and empirical studies show that very large Monte Carlo sample sizes are needed for reliable optimization via stochastic gradient descent~\citep{agrawal2025disentangling,blessing2024beyond}.

\paragraph{Low-dimensional transport maps}

The lazy map framework of \citet{brennan2020greedy} is particularly close in spirit to our work. It iteratively transforms a target distribution toward a Gaussian via a composition of low-dimensional transport maps, where at each iteration a subspace is identified using the certified dimension reduction (CDR) method of \citet{zahm2022certified} and a transport map is fit within that subspace. 
The low-dimensional transport maps in their framework are parametrized by triangular maps or autoregressive flows. 
In contrast, our method constructs coordinatewise maps, which yields MFVI subproblems that are much easier to solve and a final transport that is trivial to invert and evaluate. Moreover, the analytical tractability of MFVI lets us establish a stationarity guarantee and gradient flow structure for the iterative Gaussianization procedure, and derive a principled criterion for selecting the rotation at each step. Further comparison between the different choices of rotation is given in Section~\ref{sec-comparison-to-active-subspace} and Appendix~\ref{sec: cdr discussion}.

\section{Stationarity guarantee}\label{sec: stationarity guarantee}

A natural question regarding the validity of Algorithm~\ref{algo: iterative gaussianization} is whether it is possible to keep decreasing the KL divergence until convergence to the standard Gaussian. 
To quantify the decrease available at each step, we define the \emph{MFVI improvement}
\begin{align}\label{equ: mfvi improvement}
    \Delta_{\mfvi}(p)
    = \kl{\gamma}{p} - \inf_{F\in\calF}\;\kl{F\#\gamma}{p},
\end{align}
which is the reduction in KL divergence achieved by MFVI relative to the baseline $\kl{\gamma}{p}$. 
Since the identity map lies in $\calF$, it is always nonnegative.
At the $k$-th iteration of Algorithm~\ref{algo: iterative gaussianization}, the KL divergence decreases from $\kl{\gamma}{p^{(k-1)}}$ to $\kl{\gamma}{p^{(k)}}=\inf_{F\in\calF}\,\kl{F\#\gamma}{R_k\# p^{(k-1)}}$, so the decrease in KL divergence at iteration $k$ is exactly
\begin{align*}
\Delta_{\mfvi}(R_k\# p^{(k-1)}) .
\end{align*}

The next result shows that the standard Gaussian is the only target for which the MFVI improvement vanishes under almost every rotation.
\begin{thm}[Characterization of the stationary point]\label{thm: stationary point}
    Suppose that $p$ satisfies the standing smoothness and polynomial-growth
    assumptions in Section \ref{sec: mfvi background}.
    If $\Delta_{\mfvi}(R\# p)=0$ for almost every orthogonal matrix $R$
    with respect to the Haar measure on the orthogonal group, then $p$
    is the standard Gaussian distribution.
\end{thm}
The proof is provided in Appendix~\ref{prf: stationary point}.

Theorem~\ref{thm: stationary point} implies that whenever the current
target $p^{(k-1)}$ is not the standard Gaussian, there must exist some
rotation $R_k$ for which MFVI can strictly decrease the KL divergence.
Thus, the standard Gaussian is the only fixed point of
Algorithm~\ref{algo: iterative gaussianization} at which no rotation can
produce further improvement.

Algorithmically, the rotation $R_k$ can be chosen in several ways:
randomly, greedily to maximize
$\Delta_{\mfvi}(R\# p^{(k-1)})$, or by the PCA method
introduced later in Section~\ref{sec: pca}.
For random rotations, we define the expected MFVI improvement by
\begin{align*}
    \widebar \Delta_{\mfvi}(p) = \EE[R]{\Delta_{\mfvi}(R\# p)},
\end{align*}
where the expectation is taken over the randomness in $R$.
If $R$ is sampled from a distribution 
on the
orthogonal group
with a density that is strictly positive Haar-almost everywhere, then $\widebar \Delta_{\mfvi}(p)$ is divergence-like:
it is nonnegative and vanishes only when $p=\gamma$.
Indeed, if $\widebar \Delta_{\mfvi}(p)=0$, then
$\Delta_{\mfvi}(R\# p)=0$ for almost every $R$, and
Theorem~\ref{thm: stationary point} implies that $p=\gamma$.
The same conclusion holds for the greedy choice, with
$\widebar \Delta_{\mfvi}(p)$ defined as
$\sup_R \Delta_{\mfvi}(R\# p)$.
This argument is formalized in the following result.
\begin{prop}[First-order stationarity guarantee]
    \label{prop: stationarity guarantee}
    If the rotation $R_k$ is either chosen randomly from a distribution on the
orthogonal group
with a density that is strictly positive Haar-almost everywhere, or greedily to maximize $\Delta_{\mfvi}(R_k\# p^{(k-1)})$ at each iteration, then 
    \begin{align*}
        \widebar\Delta_{\mfvi}(p^{(k-1)}) = 0 \quad \text{if and only if}\quad p^{(k-1)}=\gamma.
    \end{align*}
    Moreover, Algorithm~\ref{algo: iterative gaussianization} guarantees that
    \begin{align*}
    \widebar\Delta_{\mfvi}(p^{(K)} ) \stackrel{a.s.}{\to} 0,\quad \EE{\min_{1\leq k\leq K}\widebar \Delta_{\mfvi}(p^{(k-1)})} = O(\frac{1}{K}),\quad \text{as } K\to\infty,
    \end{align*}
    where the almost sure convergence and expectation are over the randomness in the rotations.
\end{prop}
The proof is provided in Appendix~\ref{prf: stationarity guarantee}.

This result shows that the divergence-like quantity $\widebar \Delta_{\mfvi}(p^{(k)})$ converges to zero as the number of iterations increases, which can be understood as a first-order stationarity guarantee of the iterative algorithm~\citep{balasubramanian2022towards}. 

Although jointly optimizing over rotations and coordinatewise maps is a natural approach and has been studied in the literature~\citep{sheng2025mode}, it is computationally challenging. Each evaluation of $\Delta_{\mfvi}(R\# p)$ requires solving an inner MFVI problem, and the resulting objective over the orthogonal group is itself nonconvex. To avoid this computational burden while still obtaining an informative rotation, we now turn to an infinitesimal analysis of the KL decrease induced by coordinatewise transport, which will in turn motivate an efficient PCA-based approach for selecting rotations in Section~\ref{sec: pca}.

\section{Projected Fisher information as a local measure of MFVI improvement}\label{sec: instantaneous}

The MFVI improvement \(\Delta_{\mfvi}(R\#p)\) is the ideal criterion for choosing
the rotation, but evaluating it requires solving a full MFVI problem for each candidate
\(R\). We therefore consider a local version of the same variational problem. Starting
from the reference distribution \(\gamma\), we ask how much the KL divergence can be
decreased by an infinitesimal coordinatewise transport. The resulting quantity is the
projected Fisher information. 

\subsection{Projected Fisher information}
\label{sec: projected FI}

Given a coordinatewise vector field $\bfv:\R^d\to\R^d$, with $v_i(\bfx)=v_i(x_i)$, consider the perturbed distribution $q_\ep=(\mathrm{id}+\ep \bfv)\# \gamma$ for $\ep>0$. The instantaneous rate of change of the KL divergence at $\ep=0$ is given by
\begin{align}
\label{equ: kl instantaneous rate}
    \frac{\rd}{\rd\ep}\Big|_{\ep=0} \kl{q_\ep}{p}
    = -\EE[\gamma]{ \big\langle \bfv(\bfx), \nabla\log (p/\gamma)(\bfx) \big\rangle }.
\end{align}
Since the perturbation is applied to $\gamma$, we measure the size of $\bfv$ using the $L^2(\gamma)$-norm. The steepest descent direction is therefore the $L^2(\gamma)$-projection of
$\nabla\log(p/\gamma)$ onto the space of coordinatewise vector fields
\begin{align}\label{equ: optimal vector field}
    \bfv^*(\bfx)= 
    \Big[\EE[\gamma]{\nabla_{1}\log (p/\gamma)(\bfx) \mid x_1}, \ldots, \EE[\gamma]{\nabla_{d}\log (p/\gamma)(\bfx) \mid x_d} \Bigr],
\end{align}
and the resulting instantaneous rate of KL is given by
\begin{align*}
    \frac{\rd}{\rd\ep}\Big|_{\ep=0} \kl{q_\ep}{p}
    =-\sum_{i=1}^d \EE[\gamma]{\Bigl(\EE[\gamma]{\nabla_i\log (p/\gamma)(\bfx) \mid x_i}\Bigr)^2}=:-\tI(\gamma,p).
\end{align*}
See Appendix~\ref{sec: instantaneous analysis} for the derivation of Equations~\eqref{equ: kl instantaneous rate} and~\eqref{equ: optimal vector field}.
Without the constraint that $\bfv$ be coordinatewise, the steepest descent direction would be the full relative score $\nabla\log(p/\gamma)$, and the corresponding instantaneous rate of KL decrease would be the usual Fisher divergence $\EE[\gamma]{\|\nabla\log(p/\gamma)\|^2}$.
Thus, the quantity $\tI(\gamma,p)$ can be understood as the \emph{projected Fisher information}, which measures the instantaneous KL decrease when the velocity field is restricted to be coordinatewise. Its relationship with MFVI is studied further in~\citep{lacker2023independent}.

When performing MFVI in a rotated coordinate system, we consider vector fields of the form $\bfv(\bfx)=R\tran \bfu(R\bfx)$, where $\bfu:\R^d\to\R^d$ is coordinatewise and $R$ is a rotation matrix. For the perturbed distribution $\tq_\ep=(\mathrm{id}+\ep R\tran \bfu(R \cdot) ) \# \gamma$, we have
\[
\kl{\tq_\ep}{p} = \kl{(\mathrm{id}+\ep \bfu)\# \gamma}{p_R}
\]
by the invariance of KL divergence under invertible transformations and the rotational invariance of $\gamma$.
Therefore, the coordinatewise vector field $\bfu$ that yields the steepest descent of KL has the same form as in~\eqref{equ: optimal vector field}, with $p$ replaced by $p_R$, namely,
\begin{align}\label{equ: u*}
\bfu^*(\bfy)=\Big[\EE[\gamma]{\partial_{y_1} [ \log (p/\gamma)(R\tran \bfy)] \mid y_1},\ldots,\EE[\gamma]{\partial_{y_d}[\log (p/\gamma)(R\tran \bfy)] \mid y_d}\Bigr].
\end{align}
The optimal vector field in the original coordinates is then $\bfv^*(\bfx)=R\tran \bfu^*(R\bfx)$, and the instantaneous rate of KL decrease is $\tI(\gamma,p_R)$.

\subsection{Connection to MFVI improvement}
\label{sec: connection to mfvi improvement}

The instantaneous KL decrease is directly related to the MFVI improvement, since
\[
\Delta_{\mfvi}(p_R)
\ge \kl{\gamma}{p_R} - \kl{(\mathrm{id}+\ep \bfu^*) \# \gamma}{p_R}=
\ep \tI(\gamma,p_R)+o(\ep),\quad \ep \downarrow 0.
\]
Thus, maximizing \(\tI(\gamma,p_R)\) maximizes the first-order decrease in KL divergence available to MFVI in the rotated coordinate system.

Moreover, when the target distribution is strongly log-concave, the projected Fisher information also provides an upper bound on the MFVI improvement.
\begin{prop}[{\cite[Theorem 2.5]{lacker2023independent}}]\label{thm: pFI LSI}
    If the target distribution $p$ is $\lambda$-strongly log-concave, 
    \begin{align*}
        \Delta_{\mfvi}(p) \leq \frac{1}{2\lambda} \tI(\gamma,p).
    \end{align*}
\end{prop}
In particular, if the projected FI is small, then no
large MFVI improvement is possible. Thus \(\tI(\gamma,p)\) is a first-order
stationarity diagnostic for the MFVI objective.

For Gaussian targets, the relationship is stronger: projected FI and MFVI
improvement are equivalent up to constants. 
\begin{prop}\label{prop: pFI gaussian}
    If $p=\N(0,\Omega^{-1})$ and \(\Omega_{ii}\in[C_1,C_2]\) for some constants \(0<C_1\le C_2\), then
    \begin{align*}
        \frac{1}{4\max(C_2, 1)^2} \tI(\gamma,p)\leq \Delta_{\mfvi}(p) \leq \frac{1}{4\min(C_1,1)^2} \tI(\gamma, p).
    \end{align*}
\end{prop}
The proof is given in Appendix~\ref{prf: pFI gaussian}.

When \(p\) is close to \(\gamma\), the diagonal entries \(\Omega_{ii}\) are close to one, and the bounds above imply that
$\Delta_{\mfvi}(p) \approx\frac14\tI(\gamma,p)$.
Thus, for targets that are already partly Gaussianized, the projected FI is not only a
local descent criterion, but also closely tracks the MFVI improvement.
This is particularly relevant for iterative Gaussianization, since the transformed targets are intended to become
progressively closer to Gaussian.

\subsection{Gradient flow interpretation}
\label{sec: gradient flow}

The infinitesimal analysis above perturbs $\gamma$ along a coordinatewise vector field $\bfv$, decreasing $\kl{q_\ep}{p}$ and thus moving toward $p$. By the invariance of KL divergence under invertible transformations, this is equivalent to applying the inverse infinitesimal transport to $p$, moving the target toward $\gamma$. Iterating this step yields a continuous-time analogue of iterative Gaussianization (Algorithm~\ref{algo: iterative gaussianization}): rather than solving a full MFVI problem at each iteration, we apply at each instant an infinitesimal coordinatewise update in a rotated coordinate system.

Suppose at time $t$ the target is $p_t$, the selected rotation is $R_{p_t}$ (which may depend on $p_t$), and the rotated target is $\tp_t:=R_{p_t}\# p_t$. By~\eqref{equ: u*}, the coordinatewise perturbation of $\gamma$ with the steepest instantaneous descent of KL is
\[
\bfu_t(\bfy)=
\Big[
\EE[\gamma]{\partial_{y_1}\log(\tilde p_t/\gamma)(\bfy)\mid y_1},
\ldots,
\EE[\gamma]{\partial_{y_d}\log(\tilde p_t/\gamma)(\bfy)\mid y_d}
\Big].
\]
We then apply the inverse of the infinitesimal map $\mathrm{id}+\ep u_t$ to Gaussianize the rotated target.
Returning to the original coordinates, the updated target is given by
\[
    p_{t+\ep}=
    (\mathrm{id}+\ep\bfv_t)^{-1}\#p_t,
    \qquad \bfv_t(\bfx) = R_{p_t}\tran \bfu_t(R_{p_t}\bfx).
\]
Taking the limit $\ep\downarrow0$ formally yields the continuity equation
\begin{align}\label{equ: continuity equation}
    \partial_t p_t
    =\nabla\cdot(p_t\bfv_t),
    \qquad \bfv_t(\bfx) = R_{p_t}\tran \bfu_t(R_{p_t}\bfx).
\end{align}
See Appendix~\ref{sec: instantaneous analysis}.
This flow is thus the infinitesimal-step analogue of iterative
Gaussianization, obtained by replacing each finite MFVI update with the
steepest coordinatewise infinitesimal update in the selected rotated
coordinate system.

The dynamics~\eqref{equ: continuity equation} can be interpreted as a constrained and weighted Wasserstein gradient flow of the functional
$\calE(p)=\kl{\gamma}{p}$,
where the admissible velocity fields are restricted to rotated coordinatewise directions.
With a suitable weighted metric on this constrained tangent space, the
gradient flow has the energy dissipation
\[
    \frac{\rd}{\rd t}\calE(p_t)
    =
    -\tI(\gamma,R_{p_t}\#p_t).
\]
The formal construction of the constrained and weighted Wasserstein gradient flow structure is given in Appendix~\ref{sec: gradient flow derivation}. 
If $\tI(\gamma, R_{p_t}\# p_t)$ is uniformly continuous in $t$, then $\tI(\gamma, R_{p_t}\# p_t)$ converges to zero along this flow.
If the rotation is chosen greedily, or randomly with a density that is strictly positive Haar-almost everywhere, then the corresponding
maximum or averaged projected FI is zero if and only if \(p_t=\gamma\). This identifies
\(\gamma\) as the only possible stationary point of the gradient flow, 
providing a continuous-time analogue of the stationarity guarantee in
Proposition~\ref{prop: stationarity guarantee}.
Moreover, if a Polyak--{\L}ojasiewicz-type
inequality holds, i.e. $\tI(\gamma,R_p\#p)\geq c\,\kl{\gamma}{p}$ for some $c>0$, then Gronwall's inequality gives exponential convergence of $\kl{\gamma}{p_t}$ to zero. Conditions under which such a
PL-type inequality may hold
are discussed in Appendix~\ref{LSI discussion}.

This gradient flow structure further clarifies the role of the projected FI. It is the steepest KL decrease rate available under the same structural restriction as MFVI, namely transport by coordinatewise maps after a rotation.
Consequently, choosing rotations that make $\tI(\gamma,R\#p)$ large amounts to choosing
coordinate systems in which the target can be Gaussianized most rapidly at the
infinitesimal scale.

\section{Rotation selection via relative score PCA}
\label{sec: pca}

The previous section identifies projected Fisher information as the
local KL decrease rate available to MFVI in a rotated coordinate system.
This suggests choosing a rotation $R$ that makes
$\tI(\gamma,R\#p)$ large.
We now derive an efficient selection rule by exploiting a structural
decomposition of $\tI(\gamma,R\#p)$.

\subsection{Decomposition of projected Fisher information}
\label{sec: FI decomposition}

The projected FI is the squared $L^2(\gamma)$-norm of the relative score $h(\bfx)=\nabla\log (p/\gamma)(\bfx)$ after projection onto coordinatewise functions. 
The multivariate Hermite polynomials form a complete orthonormal basis of $L^2(\gamma)$, and expanding the relative score in this basis yields a degree-wise decomposition of the projected FI, as stated in the theorem below.

We use $\bfe_j$ to denote the $j$-th standard basis vector in $\R^d$, and $\bfk! = \prod_{j=1}^d k_j!$ for a multi-index $\bfk\in\bbN^d$. The multivariate Hermite polynomials $\{\he_{\bfk}\}_{\bfk\in\bbN^d}$ are defined as the product of univariate Hermite polynomials, i.e. $\he_{\bfk}(\bfx) = \prod_{j=1}^d \he_{k_j}(x_j)$, where $\he_{k_j}$ is the normalized probabilist's Hermite polynomial of degree $k_j$.
The proof of the following theorem is given in Appendix~\ref{prf: pFI decomposition}.
\begin{thm}\label{thm: pFI decomposition}
    Let $p$ be a density supported on $\R^d$ with $r=\log (p/\gamma)\in L^2(\gamma),h=\nabla\log(p/\gamma)\in L^2(\gamma)$.
    For an orthogonal matrix $R\in\R^{d\times d}$ with rows $R_i\tran$, $1\leq i\leq d$, we have
    \begin{align*}
        \tI(\gamma, R\# p) = I^{(1)} + \sum_{m\geq 2} I^{(m)}(R),\quad \text{where}\quad I^{(m)}(R) = \sum_{i=1}^d \big\langle \calA^{(m)}, R_i^{\otimes m} \big\rangle^2,
    \end{align*}
    with $I^{(1)}=\|\EE[\gamma]{h(\bfx)}\|_2^2$ not depending on $R$, and $\calA^{(m)}$ being an order-$m$ tensor defined as
    \begin{align*}
    \calA^{(m)}_{j_1,\ldots,j_m} = \frac{\sqrt{\bfk!}}{\sqrt{(m-1)!}} \EE[\gamma]{r(\bfx) \he_{\bfk}(\bfx)},\;\;\text{where}\;\; \bfk=\sum_{\ell=1}^m \bfe_{j_\ell} \in \bbN^d.
    \end{align*}
    If additionally all the partial derivatives of $r$ up to order $m$ are in $L^2(\gamma)$, then $\calA^{(m)}$ can be equivalently expressed as
    \begin{align*}
    \calA^{(m)}_{j_1,\ldots,j_m} = \frac{1}{\sqrt{(m-1)!}} \EE[\gamma]{\partial_{j_1}\cdots \partial_{j_m} r(\bfx)}.
    \end{align*}
\end{thm}
Theorem~\ref{thm: pFI decomposition} gives an ANOVA-type decomposition
of $\tI(\gamma,R\#p)$ into orthogonal degree-wise contributions.
The term $I^{(1)}$ is invariant under rotations, while
$I^{(m)}(R)$, for $m\geq 2$, is the contribution associated with the
degree-$(m-1)$ Hermite component of the relative score, equivalently
the degree-$m$ Hermite component of the log-density ratio.

\subsection{Relative score PCA}
The decomposition in Theorem~\ref{thm: pFI decomposition} suggests
first targeting the lowest-degree term $I^{(2)}(R)$.
This term has a particularly simple form and maximizing it reduces to a
matrix eigenvalue problem.

\begin{prop}\label{prop: pca}
    We have
    \begin{align*}
        I^{(2)}(R) = \sum_{i=1}^d (R H R\tran)_{ii}^2,\quad \text{where } H=\EE[\gamma]{\bfx \, \nabla\log(p/\gamma)(\bfx)\tran}.
    \end{align*}
    Moreover, $I^{(2)}(R)$ is maximized when $RHR\tran$ is diagonalized.
\end{prop}

Proposition~\ref{prop: pca} leads to a simple rotation-selection rule:
choose the rows of $R$ to be the eigenvectors of $H$.
Although $I^{(2)}(R)$ captures only the linear component of the relative
score, this contribution is often substantial in practice and yields a
computationally efficient choice of rotation.

Moreover, for rotated product distributions covered by the following
exact-recovery result, the eigenvectors of $H$
recover the product coordinate system exactly; in that case, the
subsequent mean-field approximation is exact. This is stated in the following result, whose proof is given in Appendix~\ref{prf: rotated product exact recovery}.

\begin{prop}[Exact recovery for rotated product distributions]\label{prop: rotated product exact recovery}
Suppose $p=Q\# \tp$, where $Q\in O(d)$ and $\tp(\bfx)=\prod_{i=1}^d \tp_i(x_i)$ is a product distribution. 
If $\EE[\gamma]{x \partial_x \log \tp_i(x) }$ are distinct for $1\leq i\leq d$, 
then the eigenvectors of $H$ coincide with the columns of $Q$, up to permutations and sign flips.
\end{prop}

In practice, the matrix $H$ can be estimated via Monte Carlo sampling:
\begin{align*}
    \widehat H = \frac{1}{N} \sum_{i=1}^N \bfx_i \Big(\nabla \log p(\bfx_i) + \bfx_i \Big) \tran,\quad \bfx_i\iid \gamma,\quad 1\leq i\leq N,
\end{align*}
and then symmetrized as $(\widehat H + \widehat H\tran)/2$. An alternative unbiased estimator is \sloppy{$\tfrac1N\sum_{i=1}^N \bfx_i \nabla \log p(\bfx_i)\tran + I_d$}, but the form given above may yield lower variance when $p$ is close to $\gamma$, since in that regime $\nabla \log p(\bfx) + \bfx$ is close to zero.

Rather than computing the full eigendecomposition of $\widehat H$, it is often sufficient to compute only the leading $r$ eigenvectors associated with the largest eigenvalues in magnitude. These eigenvectors then form the first $r$ columns of $R\tran$. We can encode such an orthogonal matrix $R$ efficiently as a product of $r$ Householder reflections $I_d-2w_j w_j\tran$ for unit vectors $w_j$, $1\leq j\leq r$. This representation requires only $O(rd)$ storage and $O(r d)$ computation to apply $R$ or $R\tran$ to a vector in $\R^d$. 

Because $H$ is the cross-covariance matrix between $\bfx\sim\gamma$ and the relative score $\nabla\log p(\bfx)+\bfx$, we term this approach \emph{relative score PCA}. A summary of the procedure is given in Algorithm~\ref{algo: pca}.

\begin{algorithm}
\caption{Relative score PCA}
\label{algo: pca}
\begin{algorithmic}
    \REQUIRE{Target distribution $p$; number of PCs $r$; Monte Carlo sample size $N$}
    \STATE{Generate $\bfx_i\iid \N(0,I_d)$ for $1\leq i\leq N$}
    \vspace{.5em}
    \STATE{Compute $\widehat{H}=\frac1N\sum_{i=1}^N \bfx_i (\nabla\log p(\bfx_i) + \bfx_i)\tran $ and symmetrize as $\widehat H= (\widehat H+\widehat H\tran )/2$}
    \vspace{.5em}
    \RETURN{Top $r$ eigenvectors of $\widehat H$ corresponding to the largest eigenvalues in magnitude if $\widehat{H}\neq 0$, otherwise use a prescribed fallback such as a Haar-random rotation.}
\end{algorithmic}
\end{algorithm}

\begin{rmk}[Higher-degree contribution]\label{rmk: higher degree contribution}
When the projected relative score is dominated by nonlinear components,
the matrix $H$ may not be informative.
In this case, one option is to fall back to a random rotation, which
still yields a strict KL decrease in expectation.
Another option is to target a higher-degree contribution
$I^{(m)}(R)=\sum_{i=1}^d \langle \calA^{(m)}, R_i^{\otimes m} \rangle^2$.
For $m\geq 3$, maximizing $I^{(m)}(R)$ leads to a tensor PCA problem,
which is NP-hard in general~\citep{hillar2013most}.
Nevertheless, practical heuristics are available, such as unfolding the
tensor into a matrix and computing its leading left singular vectors.
We show in Appendix~\ref{sec: tensor pca} that this unfolding
heuristic can be implemented without explicitly forming the tensor:
it amounts to probing the relative score $h$ with degree-$(m-1)$
Hermite polynomials along random directions, and then computing the
principal components of the resulting vectors.
\end{rmk}

\subsection{Connection to Stein discrepancy}
\label{sec: stein}

The degree-$m$ term $I^{(m)}(R)$ also has an interpretation as a
Stein discrepancy over a restricted class of coordinatewise test
functions.
For a vector-valued test function $\bfv:\R^d\to\R^d$ and a distribution $p$, the Stein operator is defined as
\begin{align*}
    \calT_p \bfv(\bfx) = \nabla \log p(\bfx)\tran \bfv(\bfx) + \nabla\cdot \bfv(\bfx). 
\end{align*}    
For sufficiently regular $\bfv$, Stein's identity gives $\EE[p]{\calT_p \bfv(\bfx)}=0$~\citep{gorham2015measuring}. The squared Stein discrepancy between $\gamma$ and $p$ with respect to a given test function $\bfv$ is $\calS_p^\gamma(\bfv)^2$, where $\calS_p^\gamma(\bfv):=\EE[\gamma]{\calT_p \bfv(\bfx)}$.
It measures how much Stein's identity is violated under $\gamma$, with larger values indicating a greater discrepancy between the two distributions. 

This quantity is directly related to the infinitesimal KL decrease considered earlier. 
Indeed, integration by parts gives $\calS_p^\gamma(\bfv) = \EE[\gamma]{\nabla \log (p/\gamma)(\bfx)\tran \bfv(\bfx)} $.
Therefore, Equation~\eqref{equ: kl instantaneous rate} can be equivalently expressed as
\begin{align*}
    \frac{\rd}{\rd\ep}\Big|_{\ep=0} \kl{(\mathrm{id} + \ep \bfv)\# \gamma}{p}
    = -\calS_p^\gamma(\bfv) .
\end{align*}
Thus, a test function with a large positive Stein discrepancy determines a perturbation direction along which the KL divergence decreases rapidly. 
This observation is also the variational principle underlying Stein variational inference algorithms~\citep{liu2016stein}.

We now consider the class of coordinatewise test functions given by $\Psi_{m-1}=\big\{\bfu_{\bfa}(\bfx)=\sum_{i=1}^d a_i \he_{m-1}(x_i) \bfe_i:\; \bfa\in\R^d, \|\bfa\|\leq 1\big\}$. 
This class probes the degree-$(m-1)$ component of the relative score
along each coordinate.
For $\bfu_\bfa\in \Psi_{m-1}$, we have 
\begin{align}\label{equ: stein discrepancy expression}
    \calS^\gamma_{R\# p}(\bfu_{\bfa}) =\sum_{i=1}^d a_i \langle \calA^{(m)}, R_i^{\otimes m} \rangle.
\end{align}
See Appendix~\ref{prf: stein discrepancy expression} for the derivation.
Taking the supremum over $\|\bfa\|\leq 1$ yields
\begin{align*}
    \sup_{\bfu_{\bfa} \in \Psi_{m-1} }\calS^\gamma_{R\# p}(\bfu_{\bfa}) = \Big(\sum_{i=1}^d \langle \calA^{(m)}, R_i^{\otimes m} \rangle^2\Big)^{1/2} = \sqrt{I^{(m)}(R)}.
\end{align*}
In other words, $I^{(m)}(R)$ is the squared Stein discrepancy between $\gamma$ and $R\# p$ over the class of coordinatewise Hermite polynomial test functions of degree $m-1$. 
Therefore, maximizing $I^{(m)}(R)$ amounts to choosing the rotation that maximizes this Stein discrepancy, or equivalently, the instantaneous KL decrease along these perturbations.

The linear case gives a complementary interpretation of the matrix $H=\EE[\gamma]{ \bfx h(\bfx) \tran}$ used in relative score PCA.
For the linear test function $\bfv(\bfx)=L\bfx$ for some $L\in\R^{d\times d}$, we have $\calS^\gamma_{p}(\bfv) = \tr(HL)$.
Thus, the matrix $H$ fully encodes the Stein discrepancy between $\gamma$ and $p$ over linear test functions. In particular, if $L=\theta\theta\tran$ for a unit vector $\theta$, then $\calS^\gamma_p(\bfv) = \theta\tran H \theta$.
This shows that the eigenvectors of $H$ identify the principal directions along which $p$ deviates the most from $\gamma$ as measured by linear Stein discrepancy. 
The sign and magnitude of the associated eigenvalues indicate both the nature (concentration vs.~flatness) and the severity of the deviation. Therefore, aligning the coordinate system with the eigenvectors of $H$ directly targets the most non-Gaussian directions in the sense of linear Stein discrepancy.

\subsection{Other gradient-based PCA methods}
\label{sec-comparison-to-active-subspace}
Using PCA-based techniques to identify important directions for a function or probability distribution is a well-established strategy. 
However, existing methods typically focus on identifying low-dimensional subspaces for function approximation or likelihood-informed posterior approximation, whereas the proposed method aims to identify the rotation that makes coordinatewise MFVI more effective.

For instance, in constructing low-dimensional response surfaces for high-dimensional functions $f$, the active subspace method~\citep{constantine2014active} identifies directions of greatest variability by computing the principal components of $\EE{\nabla_\bfx f(\bfx) \nabla_{\bfx} f(\bfx)\tran }$.
In Bayesian analysis, PCA-based dimension reduction has also been widely explored~\citep{cui2014likelihood,zahm2022certified}. The certified dimension reduction (CDR) method~\citep{zahm2022certified} seeks to approximate a distribution $\nu$ by modifying another distribution $\mu$ along a low-dimensional subspace. Specifically, CDR considers approximations of the form
$\nu_{R_r,\ell}(\bfx)\propto\ell(R_r\tran \bfx)\cdot \mu(\bfx)$,
where $R_r\in\R^{d\times r}$ is an orthogonal projection onto an $r$-dimensional subspace and $\ell$ is a profile function. In the Bayesian setting, $\nu$ is the posterior, $\mu$ is the prior, and such approximations are effective when the likelihood updates the prior primarily along a low-dimensional subspace.

When $\mu$ satisfies a subspace logarithmic Sobolev inequality, this inequality provides an upper bound on the reconstruction error $\inf_{\ell} \kl{\nu}{\nu_{R_r,\ell}}$. The bound is minimized when the columns of $R_r$ are chosen as the leading eigenvectors of the relative Fisher information matrix
\begin{align*}
    \FI(\nu,\mu)=\EE[\nu]{\nabla\log\frac{\nu}{\mu} (\nabla\log \frac{\nu}{\mu})\tran },
\end{align*}
which is the central idea behind CDR~\citep{zahm2022certified}.

In our setting, if we take $\nu=p$ as the target and $\mu=\gamma$ as the standard Gaussian, the matrix $\FI(p,\gamma)$ is not tractable, since evaluating it requires samples from $p$---the very challenge we aim to overcome. This is also why CDR relies on iterative procedures. A natural alternative is to swap the roles of $p$ and $\gamma$, considering instead $\FI(\gamma,p)$, which can be estimated by sampling from $\gamma$. However, although the leading eigenvectors of $\FI(\gamma,p)$ can still be informative for constructing low-dimensional approximation, these directions are not necessarily well-suited for constructing mean-field approximations.

For the rotated product distribution $p=Q\# \tp$ considered in Proposition~\ref{prop: rotated product exact recovery}, the optimal rotation is $Q\tran$, and relative score PCA recovers it exactly under the distinct-eigenvalue condition in that proposition. However, the eigenvectors of $\FI(\gamma,p)$ generally do not recover $Q\tran$. The reason is that, for the product distribution $\tp$, $\FI(\gamma, \tp)$ is in general not diagonal. 
For instance, if $\tp=\N(\mu,S)$ where $S$ is diagonal, then $\FI(\gamma,\tp)=(S^{-1}-I_d)^2 + S^{-1}\mu\mu\tran S^{-1}$, which depends on the mean $\mu$ and thus is not diagonal in general. 
Although one can modify the definition of $\FI(\gamma, p)$ to remove this mean dependence, the resulting matrix can still fail to recover the correct rotation. See further discussion in Appendix~\ref{sec: cdr discussion}.

\section{Numerical results}
\label{sec: experiments}

We evaluate the effectiveness of the proposed algorithm on several posterior sampling tasks. 
The experiments are designed to evaluate two aspects of the proposed methodology.
Sections~\ref{sec: logistic regression}--\ref{sec: poisson glmm} assess the one-step benefit of relative score PCA rotations for MFVI,
while Sections~\ref{sec: sparse logistic}--\ref{sec: irt} assess the full iterative Gaussianization procedure as a
flow-like approximation method.
Code for reproducing the experiments is available at \url{https://github.com/liusf15/iterative-gaussianization.git}.

For MFVI, the componentwise transformations are parametrized by rational quadratic splines~\citep{durkan2019neural} with 10 knots, with range boundaries set to $(-8, 8)$. When selecting rotations using relative score PCA (Algorithm~\ref{algo: pca}), we use a Monte Carlo sample size of 1000, and retain the top principal components that explain 95\% of the variance unless otherwise noted. Optimization of variational objectives is carried out with Adam~\citep{kingma2014adam}, using a learning rate of 0.01 and a fixed Monte Carlo sample of size 1000. For evaluation, we generate 2000 samples from the learned variational distribution to compute metrics such as maximum mean discrepancy (MMD), kernelized Stein discrepancy (KSD), effective sample size (ESS), and evidence lower bound (ELBO). When reference samples from the target distribution are needed, we generate 2000 samples using the No-U-Turn Sampler (NUTS; \citep{hoffman2014no, carpenter2017stan}) with 20 chains, 25,000 burnin iterations and 50,000 sampling iterations with thinning, unless otherwise noted. To account for randomness, each experiment is repeated 20 times independently, with random initialization and random training/evaluation samples.

Since the performance of reverse KL-based variational inference is sensitive to initialization, we employ a Laplace approximation to obtain a good starting point. Specifically, we find the minimum $\bfx^*$ of the potential function $U(\bfx)=-\log p(\bfx)$, and the corresponding inverse Hessian matrix $C^*=(\nabla^2 U(\bfx^*))^{-1}$. We then shift the target distribution by $\bfx^*$ and rescale each coordinate by $(C^*_{i,i})^{1/2}$, thereby centering the target near the origin and approximately normalizing the marginal variance. A similar initialization strategy was used in~\citep{agrawal2025disentangling}.

\subsection{Bayesian logistic regression}
\label{sec: logistic regression}
The Bayesian logistic regression model is defined as
\begin{align*}
    &\beta\sim \N(0, \sigma^2 I_d),\quad y_i\mid x_i,\beta\sim \mathrm{Bernoulli}(S(x_i\tran\beta)),\quad 1\leq i\leq n,
\end{align*}
where $S(x)=(1+e^{-x})^{-1}$ is the sigmoid function, $x_i\in\R^d$ represents the covariates, and $y_i\in\{0,1\}$ represents the binary response. We take $n=20$, $d=10$, and fix $\sigma=2$. The covariates are generated as $x_i\iid\N(0,\Sigma_X)$, where $\Sigma_X=UDU\tran$, $U$ is a random orthogonal matrix, and $D$ is diagonal with entries equally spaced on a log scale from $10^{-1}$ to $10^1$. The responses $y_i$ are drawn i.i.d. from $\mathrm{Bernoulli}(0.5)$.

We first compare standard MFVI with rotated MFVI using the proposed relative score PCA method (Algorithm~\ref{algo: pca}). Figure~\ref{fig: logistic scatter} shows samples from the learned variational distribution in red and reference samples from the target distribution in blue. Standard MFVI, shown in the left panel, substantially underestimates the posterior variance. In contrast, MFVI with the PCA-based rotation, shown in the right panel, provides a much closer approximation to the target posterior.

\begin{figure}
    \centering
    \includegraphics[width=0.5\textwidth]{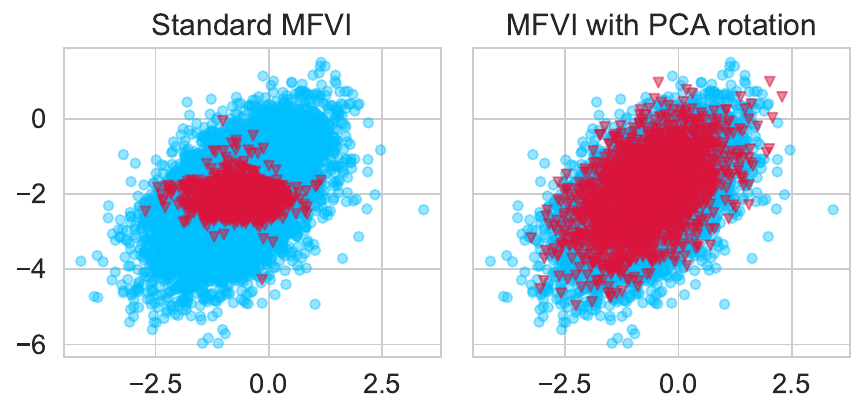}
    \caption{MFVI with different rotation choices in the Bayesian logistic regression example. The left panel shows standard MFVI without rotation, while the right panel shows MFVI with the proposed relative score PCA rotation. The red dots are samples from the learned variational distribution, and the blue dots are reference samples from the target obtained using NUTS.}
    \label{fig: logistic scatter}
\end{figure}

We also compare the proposed relative score PCA with the rotation obtained from the eigenvectors of the covariance of the relative score, as discussed in Section~\ref{sec-comparison-to-active-subspace} and Appendix~\ref{sec: cdr discussion}. We refer to this alternative as the Fisher information rotation. 
Figure~\ref{fig: logistic training} shows the MFVI training curves under different rotation choices. The proposed relative score PCA consistently achieves a lower loss, or equivalently a higher ELBO, than the Fisher information rotation. As the sample size $n$ increases, the posterior becomes closer to Gaussian, and the performance gap between the two rotations is expected to narrow.

\begin{figure}
    \centering
    \includegraphics[width=.8\textwidth]{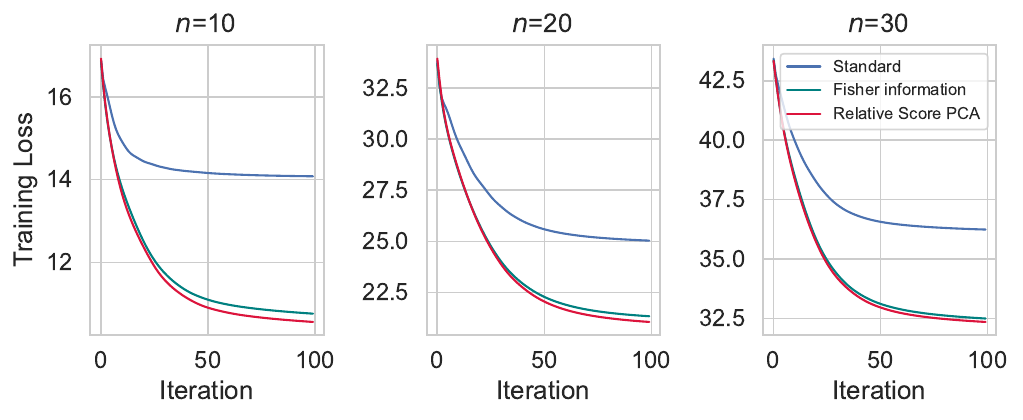}
    \caption{Training curves for MFVI with different rotation choices in the Bayesian logistic regression example. From left to right, the sample sizes are $n=10,20,30$.}
    \label{fig: logistic training}
\end{figure}

\subsection{Posteriordb benchmarks}
\label{sec: posteriordb}

We further evaluate our method on posterior distributions from \texttt{posteriordb}~\citep{posteriordb}, a repository of benchmark Bayesian models. We consider 12 low-dimensional targets with dimensions ranging from 2 to 8, and compare standard MFVI against PCA-based MFVI (using the top PCs that explain 95\% of the variance).

For each target, we report the ELBO and MMD in Table~\ref{tab: posteriordb}. The MMD is computed with 2000 reference samples drawn from NUTS. We report the mean and standard deviation (in parentheses) of each metric over 20 replicates. Additional metrics including KSD and ESS are reported in Appendix~\ref{app: posteriordb}.

We observe that PCA-based MFVI generally outperforms standard MFVI, often by a large margin. Given the small computational overhead of PCA, performing MFVI in the PCA-based coordinate system is a simple yet powerful enhancement to standard MFVI.

\begin{table}

\centering
\begin{tabular}{lcc|cc}
\toprule
 & \multicolumn{2}{c|}{ELBO} & \multicolumn{2}{c}{MMD} \\
 & MF & PCA & MF & PCA \\
\midrule
M0 & \phantom{00}-243.8 (0.0) & \phantom{00}\textbf{-243.7} (0.0) & \phantom{0}0.065 (0.005) & \phantom{0}\textbf{0.014} (0.014) \\
arK & \phantom{0000}88.3 (0.1) & \phantom{0000}\textbf{92.3} (0.1) & \phantom{0}0.241 (0.002) & \phantom{0}\textbf{0.087} (0.004) \\
garch & \phantom{00}-448.4 (0.0) & \phantom{00}\textbf{-447.8} (0.0) & \phantom{0}0.218 (0.005) & \phantom{0}\textbf{0.146} (0.013) \\
gp-regr & \phantom{000}\textbf{-23.5} (0.0) & \phantom{000}\textbf{-23.5} (0.0) & \phantom{0}\textbf{0.010} (0.010) & \phantom{0}0.015 (0.014) \\
hmm & \phantom{00}-167.6 (0.0) & \phantom{00}\textbf{-166.8} (0.0) & \phantom{0}\textbf{0.016} (0.013) & \phantom{0}0.036 (0.008) \\
kidscore-interaction & \phantom{0}-1888.3 (1.1) & \phantom{0}\textbf{-1884.3} (1.5) & \phantom{0}0.399 (0.010) & \phantom{0}\textbf{0.032} (0.012) \\
mesquite & \phantom{00}-317.6 (0.1) & \phantom{00}\textbf{-311.2} (0.2) & \phantom{0}0.270 (0.002) & \phantom{0}\textbf{0.092} (0.005) \\
nes-logit & \phantom{00}-781.8 (0.0) & \phantom{00}\textbf{-780.7} (0.0) & \phantom{0}0.339 (0.004) & \phantom{0}\textbf{0.015} (0.016) \\
normal-mixture & \phantom{0}-2098.9 (0.0) & \phantom{0}\textbf{-2098.7} (0.0) & \phantom{0}0.045 (0.006) & \phantom{0}\textbf{0.024} (0.010) \\
radon & -18562.2 (0.0) & \textbf{-18562.1} (0.0) & \phantom{0}0.055 (0.004) & \phantom{0}\textbf{0.013} (0.010) \\
sesame & \phantom{00}-107.1 (0.0) & \phantom{00}\textbf{-106.6} (0.0) & \phantom{0}0.151 (0.004) & \phantom{0}\textbf{0.018} (0.016) \\
wells & \phantom{0}-1954.1 (0.0) & \phantom{0}\textbf{-1952.6} (0.0) & \phantom{0}0.214 (0.004) & \phantom{0}\textbf{0.039} (0.006) \\
\bottomrule
\end{tabular}
\caption{\label{tab: posteriordb} Average performance metrics for \texttt{posteriordb} experiments, with standard deviations over 20 independent replicates shown in parentheses. }
\end{table}

\subsection{Poisson generalized linear mixed model}
\label{sec: poisson glmm}

Generalized linear mixed models (GLMMs) are widely used to model grouped data, yet full Bayesian inference for such models can be computationally challenging due to the high-dimensional random effects and strong dependencies between global and local parameters~\citep{fong2010bayesian}. Appropriate model reparametrizations can markedly improve both MCMC mixing~\citep{papaspiliopoulos2007general} and variational inference accuracy~\citep{tan2021use}. We consider the Poisson GLMM studied in~\citep[Section 8.1]{tan2021use}:
\begin{align*}
    &\beta_0,\beta_1 \sim \N(0,100), \\
    &\sigma^2 \sim \mathrm{InvGamma}(0.5, 0.5), \quad b_i\sim \N(0, \sigma^2),\quad 1\leq i\leq n, \\
    &y_{ij} \sim \mathrm{Poisson}(e^{\eta_{ij}} ),\quad \eta_{ij} = \beta_0 + \beta_1 x_{ij} + b_i,\quad 1\leq j\leq m.
\end{align*}
The global parameters are denoted by $\btheta_G=(\sigma,\beta_0,\beta_1)$, and the local random effects by $b_i$ ($1\leq i\leq n$). Data are generated as in~\citep{tan2021use} with $n=500$ and $m=7$, giving a total dimension of 503. Following~\citep{tan2021use}, the local effects are reparametrized as $\tb_i=(b_i - \lambda_i)/L_i$, where $\lambda_i, L_i$ depend on $\btheta_G$ and are obtained from a Gaussian approximation of the conditional distribution of $b_i\mid \btheta_G, \bfy_i$; see Appendix~\ref{sec: glmm details} for details.

For the reparametrized posterior, \citet{tan2021use} considers a Gaussian variational approximation of the form $q_G(\btheta_G) \cdot \prod_{i=1}^n q_{\tb_i}(\tb_i) $, where $q_G$ is a full-covariance Gaussian and each $q_{\tb_i}$ is a univariate Gaussian. We apply our rotated variational inference to the same reparametrized target. To keep the number of parameters comparable and demonstrate the effectiveness of relative score PCA, we approximate the rotated target with a product Gaussian distribution. 

Figure~\ref{fig: glmm} shows the posterior distributions of the global parameters $(\sigma,\beta_0,\beta_1)$, and Table~\ref{table: glmm global} reports the posterior means and standard deviations averaged over 20 independent replicates. The two VI methods, Gaussian VI and MFVI+PCA, are compared against MCMC. For the regression coefficients $\beta_0$ and $\beta_1$, both VI methods match the MCMC posteriors reasonably well. For the variance component $\sigma$, however, Gaussian VI severely underestimates the right tail of the posterior, leading to a biased posterior mean and underestimated variance. MFVI+PCA provides a noticeably better fit to the posterior of $\sigma$, with more accurate mean and variance estimates. It also yields a smaller average MSE in estimating the posterior means and standard deviations of the local effects ${b_i}$ (Table~\ref{table: glmm local} in Appendix~\ref{sec: glmm details}). In this Poisson GLMM example, the reparametrization does not fully eliminate the dependence between the global parameters $\btheta_G$ and the local effects $\tb_i$. Unlike Gaussian VI, which ignores this dependence, the proposed method exploits score information to construct rotations that better capture the global-local dependence structure.

\begin{figure}
    \centering
    \includegraphics[width=0.8\textwidth]{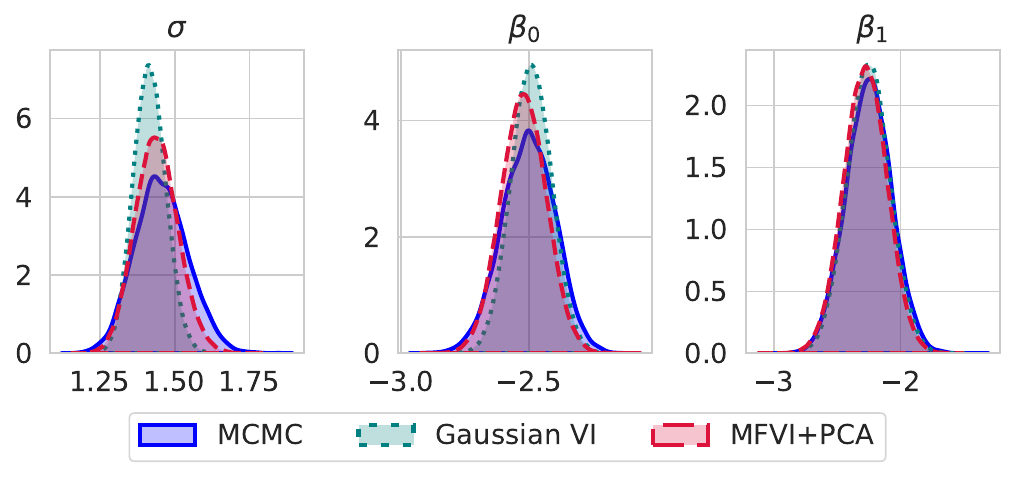}
    \caption{Posterior distributions of the global parameters in the Poisson GLMM example.}
    \label{fig: glmm}
\end{figure}

\begin{table}  
    \centering
    \begin{tabular}{lccc}
    \toprule
     & Gaussian VI & MFVI+PCA & MCMC \\
    \midrule
    $\sigma$ & \phantom{-}1.42 $\pm$ 0.05 & \phantom{-}1.44 $\pm$ 0.07 & \phantom{-}1.46 $\pm$ 0.09 \\
    $\beta_0$ & -2.49 $\pm$ 0.08 & -2.52 $\pm$ 0.09 & -2.50 $\pm$ 0.10 \\
    $\beta_1$ & -2.25 $\pm$ 0.17 & -2.28 $\pm$ 0.17 & -2.25 $\pm$ 0.18 \\
    \bottomrule
\end{tabular}
\caption{\label{table: glmm global} Posterior means and posterior standard deviations for the global parameters in Poisson GLMM.}
\end{table}

\subsection{Sparse logistic regression}
\label{sec: sparse logistic}
We consider the sparse logistic regression model given by
\begin{align*}
    \tau & \sim \mathrm{Gamma}(0.5,\, 0.5),\\
    \lambda_j & \iid \mathrm{Gamma}(0.5,\, 0.5),\quad \beta_j \iid \N(0,1),\quad 1\leq j\leq d,\\
    y_i & \sim \mathrm{Bernoulli}(S(\tau\cdot \bfx_i\tran (\bbeta \circ \blambda ) ) ),\quad 1\leq i\leq n.
\end{align*}
Here, $\tau\in\R_+$ is the global scale parameter, $\blambda\in\R_+^d$ is the per-dimension scale, and $\bbeta\in\R^d$ is the regression coefficient. We fit the model on the German credit dataset~\citep{german_credit}, which has 25 numeric covariates including an intercept. The dimension of the posterior distribution of $(\tau,\blambda,\bbeta)$ is 51. 

We compare the proposed method with the neural spline flow (NSF)~\citep{durkan2019neural}, a popular normalizing flow model. To ensure fair comparison, we match the number of layers in NSF with the number of iterations in our method, and use the same number of spline knots in both. Each neural network in NSF has 1 hidden layer of size 5, resulting in about five times as many parameters as our method. Both methods are trained with 1000 samples using Adam with learning rate 0.01. NSF is optimized for 200 iterations, while our method uses 100 iterations per layer, keeping the wall-clock times on the same order of magnitude.

The results are shown in Figure~\ref{fig: sparse logistic}, with error bars indicating variation across 20 independent runs. We vary the number of layers/iterations over ${2,4,6}$ on the $x$-axis. We observe that the proposed method significantly outperforms NSF in terms of KSD, MMD, and MSE for estimating first and second moments, while requiring less computation time. The gap is especially pronounced when the number of layers is small.

One could potentially improve the performance of NSF by increasing the number of layers or enlarging the capacity of the neural networks. However, training a more complex model comes with higher computational cost and requires careful hyperparameter tuning. Furthermore, any change in the architecture requires retraining the entire model. In contrast, the proposed method consists only of solving multiple MFVI problems, which are much easier to optimize and require minimal tuning. Moreover, if the number of iterations is deemed insufficient, additional iterations can be added seamlessly without retraining the earlier ones.

\begin{figure}
    \centering
    \includegraphics[width=.8\textwidth]{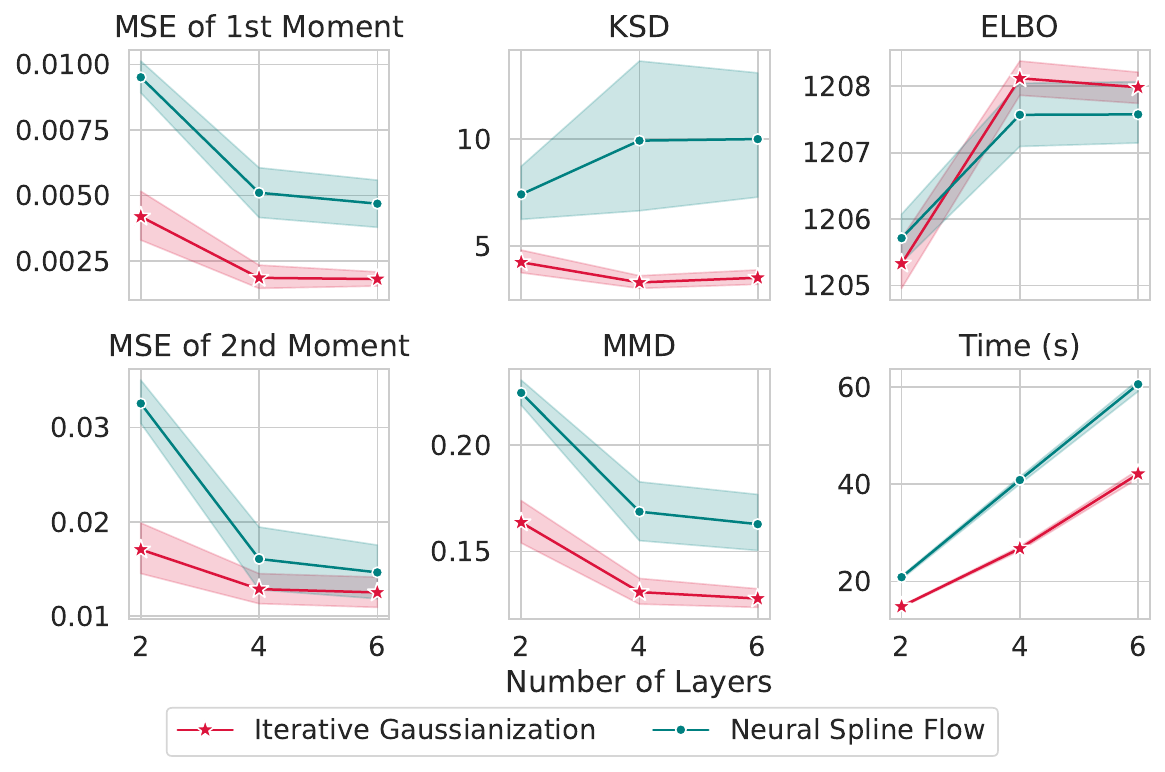}
    \caption{Average performance metrics for the sparse logistic regression experiment, with error bars showing variation across 20 independent replicates.}
    \label{fig: sparse logistic}
\end{figure}

\subsection{Item response theory model}
\label{sec: irt}

Item response theory (IRT) is a statistical framework for analyzing responses to test items, such as exam questions or survey questionnaires. The model assumes that the probability of a correct response depends on the latent ability of the individual, the difficulty of the item, and the discrimination parameter of the item. We consider the logistic-link IRT model as defined in~\citep{stan_irt}. Specifically, for individual $i\in[I]$ and item $j\in[J]$, the correctness of the response $y_{ij}$ is modeled as
\begin{align*}
    y_{ij}\mid \theta_i, a_j, b_j \sim \mathrm{Bernoulli}(S(a_j(\theta_i - b_j) )),
\end{align*}
where $\theta_i$ is the ability of individual $i$, $a_j$ is the discrimination parameter of item $j$, and $b_j$ is the difficulty parameter of item $j$. The prior distributions are given by
\begin{align*}
    a_j &\sim \mathrm{LogNormal}(0,\, \sigma_a^2),\quad j\in[J], \quad \sigma_a \sim \mathrm{LogNormal}(0,\, 2^2), \\
    b_j &\sim \N(\mu_b, \, \sigma_b^2), \quad j\in[J], \quad \mu_b \sim \N(0,\, 5^2),\quad \sigma_b\sim \mathrm{LogNormal}(0,\,  2^2),\\
    \theta_i &\sim \N(0,\, 1),\quad i\in[I].
\end{align*}
We use the benchmark dataset from Stan\footnote{\url{https://github.com/stan-dev/stat_comp_benchmarks/tree/master/benchmarks/irt_2pl}} with $I=100$ and $J=20$, resulting in a posterior distribution of dimension 143.

Due to the high dimensionality of the target distribution, we use the sliced MMD and sliced Wasserstein-2 distance ($W_2$) to evaluate the performance of the learned variational distribution. We consider the first two principal components (PC-1, PC-2) of the reference samples, as well as the last two principal components (PC-142, PC-143). These PCs correspond to the directions in which the target distribution has the largest and smallest variance, respectively. We compute the sliced MMD and sliced $W_2$ along these directions, using 2000 samples from the learned variational distribution and 2000 reference samples. 

The NSF uses the same architecture as in Section~\ref{sec: sparse logistic}, with 6 layers, and is trained with Adam (learning rate 0.01) for 1000 iterations. The proposed method uses 4 iterations with PCA-based rotations that include all principal components. Each MFVI problem is optimized with Adam (learning rate 0.01) for 200 iterations.

The results are shown in Table~\ref{tab: irt-2pl}. Values in parentheses are standard deviations across 20 independent replicates. We observe that the proposed method consistently outperforms NSF across all metrics, demonstrating its ability to approximate the target distribution more accurately along both high-variance and low-variance directions.

\begin{table}
    \centering
    \begin{tabular}{p{2.7cm}p{2.7cm}p{2.7cm}p{2.7cm}p{2.7cm}}
    \toprule
     & MMD (PC-1) & MMD (PC-2) & MMD (PC-142) & MMD (PC-143) \\
    \midrule
    NSF & 0.799 (0.023) & 0.853 (0.021) & 0.224 (0.022) & 0.326 (0.024) \\
    Gaussianization & \textbf{0.312} (0.021) & \textbf{0.310} (0.035) & \textbf{0.180} (0.018) & \textbf{0.115} (0.047) \\
    \bottomrule
    \toprule
     & $W_2$ (PC-1) \phantom{M} & $W_2$ (PC-2) \phantom{M} & $W_2$ (PC-142) \phantom{M} & $W_2$ (PC-143) \phantom{M} \\
    \midrule
    NSF & 2.022 (0.073) & 2.120 (0.064) & 0.096 (0.026) & 0.069 (0.011) \\
    Gaussianization & \textbf{0.714} (0.032) & \textbf{0.667} (0.057) & \textbf{0.042} (0.005) & \textbf{0.016} (0.006) \\
    \bottomrule
    \end{tabular}
\caption{\label{tab: irt-2pl} Results for the item response theory model. Sliced MMD and sliced $W_2$ in the principal component directions are reported, with standard deviation in parentheses.}
\end{table}

\section{Conclusions}
\label{sec: conclusion}

We proposed an algorithm for sampling from unnormalized densities by iteratively performing MFVI in carefully rotated coordinate systems. The rotations are designed to reduce coordinate dependence and expose non-Gaussian directions, thereby increasing the improvement achieved by MFVI. The rotation-selection step is implemented through a PCA procedure based on the relative scores of the target distribution; it is fast to compute, cheap to store, and allows the method to capture complex distributions with minimal parametrization. The resulting approach provides a computationally efficient alternative to heavily parametrized normalizing flows. We established stationarity guarantees for the iterative procedure, derived a formal gradient-flow structure, and demonstrated its effectiveness on posterior sampling tasks.

In practice, the transport map obtained by the algorithm is approximate, meaning that the approximation $q^{(k)}$ defined in Equation~\eqref{equ: qk} does not exactly match the target. The quality of the approximation can be evaluated using diagnostics such as effective sample size, since the density of $q^{(k)}$ is available in closed form. If the approximation is insufficient, additional iterations can be added without modifying earlier ones. When estimating posterior expectations, variance reduction can be achieved with quasi-Monte Carlo methods~\citep{liu2024transport}. Any residual bias can be corrected, either by using $q^{(k)}$ as a proposal in importance sampling or by applying Metropolis-Hastings corrections in MCMC~\citep{gabrie2022adaptive}. The Gaussianization transformation can also serve as a pre-conditioning step to make the geometry of the target more amenable to MCMC~\citep{parno2018transport,hoffman2019neutra}.

\section*{Acknowledgements}

We thank Bob Carpenter, David Dunson, Aram-Alexandre Pooladian, Lawrence Saul, and Surya Tokdar for their valuable discussions and feedback.

\appendix

\section{Proof of Theorem~\ref{thm: gaussian convergence}}
\label{sec: proof of gaussian convergence}

\begin{proof}
    Note that the MF approximation of $p^{(k)}$ ($k\geq1$) is $\gamma=\N(0,I_d)$, so we can write $p^{(k)}=\N(0,\Sigma)$, where $(\Sigma^{-1})_{ii}=1$ for $1\leq i\leq d$. Write the eigendecomposition of $\Sigma$ as $\Sigma=U\Lambda U\tran$, where $\Lambda=\diag(\lambda_1,\ldots,\lambda_d)$. Denote $\nu_i=1/\lambda_i$, then they must satisfy 
    \begin{align*}
        \sum_{i=1}^d \nu_i = \sum_{i=1}^d \frac{1}{\lambda_i} = d.
    \end{align*}
    The KL divergence at iteration $k$ is equal to
    \begin{align*}
        \calL_k=\kl{\N(0,I_d)}{\N(0,\Sigma)} = \frac12 \sum_{i=1}^d (\log \lambda_i + \frac{1}{\lambda_i} -1)=\frac12 \sum_{i=1}^d \log \lambda_i.
    \end{align*}
    For an orthogonal matrix $R$, the MF approximation of $\N(0, R\Sigma R\tran)$ is $\N(0, S)$, where $S$ is diagonal with $S_{ii}^{-1}=((R\Sigma R\tran)^{-1})_{ii} $. Write $\tilde R=R U$, then 
    \begin{align*}
        \frac{1}{S_{ii}} = (\tR \Lambda^{-1} \tR\tran )_{ii} = \sum_{j=1}^d \frac{\tR_{ij}^2}{\lambda_j}.
    \end{align*}
    The KL divergence at iteration $k+1$ is
    \begin{align*}
        \calL_{k+1}
        &=\kl{\N(0,S)}{\N(0,R \Sigma R\tran)}\\
        &=\frac12\Big[\tr(S(R\Sigma R\tran)^{-1} ) + \log|\Sigma| - \log|S| - d  \Big]\\
        &=\frac12\Big[\tr(S\tR\Lambda^{-1} \tR\tran ) + \sum_{i=1}^d \log\lambda_i - \sum_{i=1}^d \log S_{ii} - d  \Big].
    \end{align*}
    Note that
    \begin{align*}
        \tr(S\tR\Lambda^{-1} \tR\tran ) &= \sum_{i=1}^d S_{ii} \sum_{j=1}^d \frac{\tR_{ij}^2}{\lambda_j}  = \sum_{i=1}^d S_{ii}\cdot \frac{1}{S_{ii}} = d.
    \end{align*}
   Thus
    \begin{align*}
        \calL_{k+1} &= \frac12\Big[\sum_{i=1}^d \log\lambda_i - \sum_{i=1}^d \log S_{ii} \Big] 
        = \frac12\Big[\sum_{i=1}^d\log\lambda_i + \sum_{i=1}^d \log \sum_{j=1}^d \frac{\tilde R_{ij}^2}{\lambda_j} \Big].
    \end{align*}
    Note that $\tR=RU$ is a uniformly distributed orthogonal matrix.
    Applying Lemma~\ref{lem: bound dirichlet sum} with $\nu_i=1/\lambda_i$, we have
    \begin{align*}
        \bbE_{R}\Big[\sum_{i=1}^d\log\sum_{j=1}^d \tilde R_{ij}^2\nu_j \Big] \leq \frac{2}{(d+2)\chi^2} \sum_{i=1}^d \log\nu_i.
    \end{align*}
    Therefore,
    \begin{align*}
        \EE[R]{\calL_{k+1}}&\leq \frac12\left[\sum_{i=1}^d \log\lambda_i - \frac{2}{(d+2)\chi^2} \sum_{i=1}^d\log\lambda_i \right]\\
        &=\frac12 \Big(1-\frac{2}{(d+2)\chi^2} \Big)\sum_{i=1}^d \log\lambda_i\\
        &=\Big(1-\frac{2}{(d+2)\chi^2} \Big)\cdot \calL_k.
    \end{align*}
    This proves the theorem.
\end{proof}

\begin{rmk}
A slight modification of the proof of Lemma~\ref{lem: bound dirichlet sum} gives  
\begin{align*}
    \bbE_{R}\Big[\sum_{i=1}^d\log\sum_{j=1}^d \tilde R_{ij}^2\nu_j \Big] \geq \frac{2\chi^2}{(d+2)} \sum_{i=1}^d \log\nu_i.
\end{align*}
This gives the lower bound $\bbE\calL_{k+1}\geq (1-\frac{2\chi^2}{d+2})\calL_k $ when $\frac{2\chi^2}{d+2}< 1$.
\end{rmk}

\begin{lem}\label{lem: bound dirichlet sum}
    Suppose $\nu_i>0$ and $\sum_{i=1}^d \nu_i=d$. For $\nu_{\max}=\underset{1\leq i\leq d}{\max}\, \nu_i $ and $\nu_{\min}=\underset{1\leq i\leq d}{\min}\, \nu_i $, assume $\frac{\nu_{\max}}{\nu_{\min}}\leq\chi$. Then
    \begin{align*}
        \EE[R]{\sum_{i=1}^d \log\sum_{j=1}^d R_{ij}^2\nu_j } \leq \frac{2}{(d+2)\chi^2} \sum_{i=1}^d \log\nu_i,
    \end{align*}
    where the expectation is taken over a uniformly distributed orthogonal matrix $R\in\R^{d\times d}$.
\end{lem}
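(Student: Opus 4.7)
The plan is to combine a quadratic upper bound on $\log$ with the exact second-moment identity for the diagonal entries of $R\,\diag(\nu)\,R\tran$ already computed in the proof of Corollary~\ref{coro: random R}, and then convert the resulting squared-deviation bound into one involving $\sum_j\log\nu_j$ via a matching quadratic lower bound on $\log$.

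I would begin by setting $X_i=(R\,\diag(\nu_1,\ldots,\nu_d)\,R\tran)_{ii}=\sum_j R_{ij}^2\nu_j$. Each $X_i$ is a convex combination of the $\nu_j$'s, so $X_i\in[\nu_{\min},\nu_{\max}]$, and $\sum_i X_i=\tr(R\,\diag(\nu)\,R\tran)=d$. The quantity to bound is $\EE[R]{\sum_i\log X_i}$. Note also that $\nu_{\min}\le 1\le\nu_{\max}$ since $\sum_j\nu_j=d$, and that $\sum_j\log\nu_j\le 0$ by AM--GM.

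Next I would establish the elementary inequality
\[
\log x \;\le\; (x-1) - \frac{(x-1)^2}{2\nu_{\max}}, \qquad x\in(0,\nu_{\max}],
\]
which follows from $\log x=(x-1)-\int_1^x(t-1)/t\,dt$ by bounding $1/t\ge 1/\nu_{\max}$ for $x\ge 1$, and by noting that for $x\le 1$ the stronger bound $\log x\le (x-1)-(x-1)^2/2$ holds and implies the stated one because $\nu_{\max}\ge 1$. Applying this coordinatewise and using $\sum_i(X_i-1)=0$ gives
\[
\sum_i\log X_i \;\le\; -\frac{1}{2\nu_{\max}}\sum_i(X_i-1)^2.
\]
Taking expectations and using the exact Dirichlet second-moment identity from the proof of Corollary~\ref{coro: random R} (applied with $D=\diag(\nu_j)$), together with $\sum_i X_i=d$,
\[
\EE[R]{\sum_i(X_i-1)^2} \;=\; \EE[R]{\sum_i(RDR\tran)_{ii}^2}-d \;=\; \frac{2\sum_j(\nu_j-1)^2}{d+2},
\]
where $\sum_j\nu_j=d$ rewrites $\sum_j\nu_j^2-d$ as $\sum_j(\nu_j-1)^2$.

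Finally I would use the matching quadratic lower bound
\[
\log\nu_j \;\ge\; (\nu_j-1) - \frac{(\nu_j-1)^2}{2\nu_{\min}}, \qquad \nu_j\in[\nu_{\min},\nu_{\max}],
\]
valid because $\nu_{\min}\le 1$. Summing over $j$ yields $\sum_j(\nu_j-1)^2\ge -2\nu_{\min}\sum_j\log\nu_j$, and chaining everything together gives
\[
\EE[R]{\sum_i\log X_i} \;\le\; -\frac{\sum_j(\nu_j-1)^2}{\nu_{\max}(d+2)} \;\le\; \frac{2\nu_{\min}}{\nu_{\max}(d+2)}\sum_j\log\nu_j \;\le\; \frac{2}{(d+2)\kappa^2}\sum_j\log\nu_j,
\]
where the last step uses $\nu_{\min}/\nu_{\max}\ge 1/\kappa\ge 1/\kappa^2$ together with $\sum_j\log\nu_j\le 0$. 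The main source of difficulty here is bookkeeping the sign flips: both $-\sum_i(X_i-1)^2$ and $\sum_j\log\nu_j$ are nonpositive, so every multiplication by one of them reverses an inequality, and one has to track each reversal to land on the stated $\kappa^{-2}$ factor rather than its reciprocal. The other potential pitfall is the domain of validity of the two Taylor-type bounds: both require a case split around $x=1$ that must respect the containment $X_i\in[\nu_{\min},\nu_{\max}]$ and the fact that $\nu_{\min}\le 1\le\nu_{\max}$.
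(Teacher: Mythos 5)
Your proof is correct, and it shares the same skeleton as the paper's argument: control the concavity defect of $\log$ by quadratics on $[\nu_{\min},\nu_{\max}]$, compute the resulting second moment exactly via the $\mathrm{Dirichlet}(1/2,\ldots,1/2)$ law of a row of $R$ (your identity $\EE[R]{\sum_i(X_i-1)^2}=\frac{2}{d+2}\sum_j(\nu_j-1)^2$ is precisely Lemma~\ref{lem: variance of Y} in unnormalized form), and then convert $\sum_j(\nu_j-1)^2$ back into $\sum_j\log\nu_j$ by a matching quadratic bound. The difference lies in the quadratic bounds themselves: the paper applies a Jensen-gap (second-order Taylor) lemma to $g(x)=-\log x$, whose curvature extremes on $[\nu_{\min},\nu_{\max}]$ give constants $\frac{1}{2\nu_{\max}^2}$ and $\frac{1}{2\nu_{\min}^2}$ and hence the $\kappa^{-2}$ factor, whereas your integral-remainder inequalities $\log x\le (x-1)-\frac{(x-1)^2}{2\nu_{\max}}$ and $\log x\ge (x-1)-\frac{(x-1)^2}{2\nu_{\min}}$ are valid on the relevant range (your case splits at $x=1$ are fine since $\nu_{\min}\le 1\le\nu_{\max}$) and are sharper there. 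As a result your intermediate bound $\EE[R]{\sum_i\log X_i}\le \frac{2\nu_{\min}}{(d+2)\nu_{\max}}\sum_j\log\nu_j\le\frac{2}{(d+2)\kappa}\sum_j\log\nu_j$ is actually stronger than the stated lemma, with a $1/\kappa$ rather than $1/\kappa^2$ dependence, which you then relax (correctly, using $\sum_j\log\nu_j\le 0$ and $\kappa\ge1$) to the claimed constant; your sign bookkeeping, the use of $\sum_i(X_i-1)=0$ in place of the paper's $\EE{W}=1$, and the appeal to the second-moment computation from Corollary~\ref{coro: random R} all check out.
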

\begin{proof}
    Note that each row of $R$ is uniformly distributed on $\bbS^{d-1}=\{\bfv\in\R^d: \bfv\tran\bfv=1 \}$.
    Denote $\bfy=(R_{i1}^2,\ldots,R_{id}^2)$, so we have $\sum_{i=1}^d y_i=1$ and $\EE{y_i}=1/d$.
    The expectation on the left-hand side is equal to
    \begin{align*}
        \EE[R]{\sum_{i=1}^d \log\sum_{j=1}^d R_{ij}^2\nu_j } =d\cdot \EE[\bfy]{\log \sum_{i=1}^d\nu_i y_i }.
    \end{align*}
    Define the random variable $W=\sum_{i=1}^d \nu_i y_i $, which satisfies
    \begin{align*}
        \nu_{\min} \leq W \leq \nu_{\max},\quad \EE{W}=\frac1d\sum_{i=1}^d \nu_i = 1.
    \end{align*}
    Applying Lemma~\ref{lem: jensen gap} with $g(x)=-\log x$ and the random variable $W$, we have
    \begin{align*}
        \frac{1}{2\nu_{\max}^2} \Var{W} \leq -\EE{\log W } \leq \frac{1}{2\nu_{\min}^2} \Var{W}.
    \end{align*}

    Similarly, define $Z$ as the random variable that is uniformly distributed on $\{\nu_1,\ldots,\nu_d\}$. Then $Z$ is also bounded between $\nu_{\min}$ and $\nu_{\max}$, and satisfies $\EE{Z}=\frac1d\sum_{i=1}^d \nu_i= 1$. Applying Lemma~\ref{lem: jensen gap} again with $g(x)=-\log x$ and the random variable $Z$, we have
    \begin{align*}
        \frac{1}{2\nu_{\max}^2} \Var{Z} \leq -\EE{\log Z } \leq \frac{1}{2\nu_{\min}^2} \Var{Z}.
    \end{align*}
    By Lemma~\ref{lem: variance of Y}, we have $\Var{W}=\frac{2}{d+2}\Var{Z}$. Therefore,
    \begin{align*}
        \EE{\log W} &\leq -\frac{1}{2\nu_{\max}^2} \Var{W}\\
        &=-\frac{1}{2\nu_{\max}^2} \frac{2}{d+2}\Var{Z}\\
        &\leq \frac{1}{2\nu_{\max}^2} \frac{2}{d+2} \cdot 2\nu_{\min}^2 \EE{\log Z}\\
        &\leq \frac{2}{(d+2)\chi^2} \frac{1}{d}\sum_{i=1}^d \log\nu_i.
    \end{align*}
    This proves that $d\cdot \EE{\log \sum_{i=1}^d\nu_i y_i }\leq \frac{2}{(d+2)\chi^2}\sum_{i=1}^d\log\nu_i $.

\end{proof}

\begin{lem}\label{lem: jensen gap}
    Let $X$ be a random variable that is bounded between $[a,b]$. Let $g:[a,b]\to \R$ be twice continuously differentiable. Then 
    \begin{align*}
        \frac12 \inf_{x\in[a,b]} g''(x)\Var{X}\leq \EE{g(X)} - g(\EE{X}) \leq \frac{1}{2} \sup_{x\in[a,b]}g''(x) \Var{X}.
    \end{align*}
\end{lem}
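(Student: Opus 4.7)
The plan is to prove both inequalities simultaneously via a second-order Taylor expansion of $g$ around the mean $\mu = \EE{X}$. By the Lagrange form of Taylor's theorem applied pointwise for each realization of $X$, there exists $\xi = \xi(X) \in [a,b]$ lying between $X$ and $\mu$ such that
\begin{align*}
    g(X) = g(\mu) + g'(\mu)(X - \mu) + \tfrac{1}{2} g''(\xi) (X - \mu)^2.
\end{align*}
This is well-defined because $X \in [a,b]$ and $\mu = \EE{X} \in [a,b]$, so any point between them also lies in $[a,b]$, where $g''$ is defined and continuous.

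Next I would take expectations on both sides. The linear term vanishes because $\EE{X - \mu} = 0$, leaving
\begin{align*}
    \EE{g(X)} - g(\mu) = \tfrac{1}{2} \EE{ g''(\xi)(X - \mu)^2 }.
\end{align*}
Since $(X - \mu)^2 \geq 0$ and $\xi \in [a,b]$, I can sandwich the integrand pointwise:
\begin{align*}
    \tfrac{1}{2}\inf_{x \in [a,b]} g''(x) \cdot (X - \mu)^2 \;\leq\; \tfrac{1}{2} g''(\xi) (X - \mu)^2 \;\leq\; \tfrac{1}{2}\sup_{x \in [a,b]} g''(x) \cdot (X - \mu)^2.
\end{align*}
Taking expectations and using $\EE{(X - \mu)^2} = \Var{X}$ yields both inequalities at once.

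There is essentially no hard step here: the only subtlety to confirm is that the intermediate point $\xi$ from Taylor's theorem genuinely lies in $[a,b]$, which is immediate from the convexity of the interval containing $X$ and $\mu$. No measurability concerns arise either, since we never need a measurable selection of $\xi$—the sandwich bound holds pointwise before taking expectations, so the measurable quantity we integrate is just $(X - \mu)^2$ multiplied by a constant. One could alternatively avoid the mean-value step entirely by writing the integral remainder $g(X) - g(\mu) - g'(\mu)(X-\mu) = \int_\mu^X (X - t) g''(t)\, dt$ and bounding $g''$ uniformly inside the integral; this gives the same conclusion and is perhaps cleaner if one dislikes the implicit selection of $\xi$.
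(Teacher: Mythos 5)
Your proof is correct and follows essentially the same route as the paper: both Taylor-expand $g$ about $\mu = \EE{X}$, sandwich the second-order remainder pointwise using $\inf g''$ and $\sup g''$, and then take expectations so the linear term drops out. The extra remarks on the intermediate point $\xi$ lying in $[a,b]$ and on measurability are fine but not needed, as the paper likewise treats the pointwise sandwich as immediate.
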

\begin{proof}[Proof of Lemma~\ref{lem: jensen gap}]
    Denote $\mu=\EE{X}\in[a,b]$ as the mean of $X$. Taking the Taylor expansion of $g(X)$ around $\mu$ gives
    \begin{align*}
        \frac12\inf_{x\in[a,b]} g''(x) (X-\mu)^2\leq g(X) - g(\mu) - g'(\mu)(X-\mu)\leq \frac12 \sup_{x\in[a,b]} g''(x) (X-\mu)^2.
    \end{align*}
    Taking expectation of both sides proves the result.
\end{proof}

\begin{lem}\label{lem: variance of Y}
    Suppose $\bfr$ is uniformly distributed on the sphere $\bbS^{d-1}$ and $y_i=r_i^2$. Suppose $\nu_1,\ldots,\nu_d\geq 0$ and satisfy $\sum_{i=1}^d \nu_i=d$. Then 
    \begin{align*}
    \Var{\sum_{i=1}^d \nu_i y_i } = \frac{2}{(d+2)d}\sum_{i=1}^d (\nu_i-1)^2.
    \end{align*}
\end{lem}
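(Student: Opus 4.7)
The plan is to compute the variance directly by expanding it in terms of covariances of the $y_i$, using the Dirichlet moment identities already invoked in the proof of Corollary~\ref{coro: random R}.

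First, recall that $\bfy = (r_1^2, \ldots, r_d^2) \sim \mathrm{Dirichlet}(1/2, \ldots, 1/2)$, since $r_i^2 \stackrel{d}{=} x_i^2 / \|\bfx\|_2^2$ for $\bfx \sim \N(0, I_d)$ and $x_i^2 \sim \mathrm{Gamma}(1/2, 2)$. From the Dirichlet moments, $\EE{y_i} = 1/d$, $\EE{y_i^2} = 3/(d(d+2))$, and $\EE{y_i y_j} = 1/(d(d+2))$ for $i \neq j$. This yields
\begin{align*}
    \Var(y_i) = \frac{2(d-1)}{d^2(d+2)}, \qquad \Cov(y_i, y_j) = -\frac{2}{d^2(d+2)} \text{ for } i \neq j.
\end{align*}

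Next, I would expand
\begin{align*}
    \Var\Big(\sum_{i=1}^d \nu_i y_i\Big) = \sum_{i=1}^d \nu_i^2 \Var(y_i) + \sum_{i \neq j} \nu_i \nu_j \Cov(y_i, y_j),
\end{align*}
and use the constraint $\sum_i \nu_i = d$ to write $\sum_{i \neq j} \nu_i \nu_j = (\sum_i \nu_i)^2 - \sum_i \nu_i^2 = d^2 - \sum_i \nu_i^2$. Substituting the covariance formulas and collecting terms, the coefficient of $\sum_i \nu_i^2$ becomes $\frac{2(d-1)+2}{d^2(d+2)} = \frac{2}{d(d+2)}$, while the constant contribution is $-\frac{2}{d+2}$. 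This gives $\Var(\sum_i \nu_i y_i) = \frac{2}{d(d+2)}\big[\sum_i \nu_i^2 - d\big]$.

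Finally, the constraint $\sum_i \nu_i = d$ allows the rewriting $\sum_i (\nu_i - 1)^2 = \sum_i \nu_i^2 - 2\sum_i \nu_i + d = \sum_i \nu_i^2 - d$, which produces the claimed identity. There is no serious obstacle here; the computation is routine once the Dirichlet second moments are available, and the only thing to watch is the bookkeeping of the diagonal versus off-diagonal terms when replacing $\sum_{i \neq j} \nu_i \nu_j$ via the constraint.
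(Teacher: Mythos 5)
Your proposal is correct and follows essentially the same route as the paper's proof: both identify $\bfy\sim\mathrm{Dirichlet}(1/2,\ldots,1/2)$, plug in the standard second moments to get the variance and covariance of the $y_i$, expand $\Var(\sum_i \nu_i y_i)$, and use $\sum_i \nu_i = d$ to rewrite the result as $\frac{2}{d(d+2)}\sum_i(\nu_i-1)^2$. The only difference is trivial bookkeeping (you substitute $\sum_{i\neq j}\nu_i\nu_j = d^2-\sum_i\nu_i^2$ whereas the paper completes the full double sum), and your intermediate constants all check out.
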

\begin{proof}[Proof of Lemma~\ref{lem: variance of Y}]
    Because $\bfy=(y_1,\ldots,y_d)\sim \mathrm{Dirichlet}(1/2, \ldots, 1/2)$, we have from~\cite[Chapter 49]{kotz2019continuous} that
    \begin{align*}
        \Cov{y_i,y_j}=\frac{\delta_{i,j}\frac1d - \frac{1}{d^2} }{\frac{d}{2}+1 }.
    \end{align*}
    The variance of $\sum_{i=1}^d \nu_i y_i$ can be computed as
    \begin{align*}
    \Var{\sum_{i=1}^d \nu_i y_i}&=\sum_{i=1}^d \Var{\nu_i y_i } + \sum_{i\neq j} \Cov{\nu_i y_i, \nu_j y_j}\\
    &=\sum_{i=1}^d \nu_i^2 \frac{\frac1d - \frac{1}{d^2} }{\frac{d}{2}+1}  + \sum_{i\neq j}\nu_i \nu_j \frac{-\frac{1}{d^2} }{\frac{d}{2}+1 }\\
    &=\sum_{i=1}^d \nu_i^2 \frac{\frac1d}{\frac{d}{2}+1} + \sum_{i,j=1}^d \nu_i \nu_j \frac{-\frac{1}{d^2} }{\frac{d}{2}+1 }\\
    &=\frac{2}{d(d+2)}\sum_{i=1}^d \nu_i^2 -\frac{2}{d^2(d+2)}(\sum_{i=1}^d\nu_i)^2\\
    &=\frac{2}{d(d+2)}\sum_{i=1}^d \nu_i^2 -\frac{2}{d+2}\\
    &=\frac{2}{d+2}(\frac1d\sum_{i=1}^d \nu_i^2-1 )\\
    &=\frac{2}{d+2}\cdot \frac1d\sum_{i=1}^d (\nu_i-1)^2 .
    \end{align*}
\end{proof}

\section{Proofs for Section~\ref{sec: stationarity guarantee}}
\label{sec: proofs}

\subsection{Proof of Theorem~\ref{thm: stationary point}}
\label{prf: stationary point}

\begin{proof}[Proof of Theorem~\ref{thm: stationary point}]
By the standing polynomial-growth assumptions, the log-density ratio
\(r=\log(p/\gamma)\) and its gradient \(h=\nabla r\) have at most polynomial
growth. In particular, \(r\in L^2(\gamma)\) and \(h\in L^2(\gamma;\mathbb R^d)\),
so the argument and in particular the Hermite expansions below are justified.

    For Haar-almost every \(R\), \(\Delta_{\mfvi}(R\#p)=0\). Fix such an \(R\). Let the first column of $R\tran$ be $\theta\in\bbS^{d-1}$ and the remaining columns form $\theta^\perp$.
    Since $\gamma$ satisfies the MFVI optimality condition for $p_R(\bfx)=p(R\tran \bfx)$, we have
    \begin{align*}
        \EE[\gamma_{-1}]{\nabla_{x_1} \log p_R(\bfx)} = \nabla_{x_1} \log\gamma_1(x_1).
    \end{align*}
    Define $h(\bfx) :=\nabla\log (p/\gamma)(\bfx)$ and $h_R(\bfx)=\nabla\log(p_R/\gamma)(\bfx)=R h(R\tran \bfx)$, then the last display is equivalent to $\EE[\gamma_{-1}]{\langle \theta, h(R\tran \bfx)\rangle} = 0$.
    Thus, for any $\gamma$-integrable function $\psi:\R\to\R$, 
    \begin{align*}
        \EE[\gamma]{\langle \theta, h(R\tran \bfx)\rangle \cdot \psi(x_1)} = 0.
    \end{align*}
    By the change of variable $\bfy=R\tran \bfx\sim \gamma$, we have
    \begin{align}\label{equ: opt cond psi}
        \EE[\gamma]{\langle \theta, h(\bfy)\rangle \cdot \psi(\theta\tran \bfy)} = 0.
    \end{align}
    Since the set of such rotations has full Haar measure and the first column of
a Haar-distributed orthogonal matrix is uniformly distributed on \(\bbS^{d-1}\),
\eqref{equ: opt cond psi} holds for spherical-almost every
\(\theta\in\bbS^{d-1}\).

    Conceptually, \eqref{equ: opt cond psi} says that the averaged directional
derivative of \(r=\log(p/\gamma)\) vanishes along almost every direction
\(\theta\). Indeed, writing \(Y=t\theta+Y^\perp\), where
\(Y^\perp\sim \gamma\) on \(\theta^\perp\), it implies
\[
    \frac{\rd}{\rd t}\,
    \EE{r(t\theta+Y^\perp)}=0 .
\]
Hence \(\EE{r(Y)\mid \theta^\top Y}\) is constant for almost every such
\(\theta\). Therefore \(r\) is orthogonal in \(L^2(\gamma)\) to functions of
one-dimensional projections. Since the span of polynomials
\((\theta^\top x)^n\), over \(\theta\in\mathbb S^{d-1}\) and \(n\ge0\), is dense
in \(L^2(\gamma)\), this already gives the desired conclusion. We now give the
corresponding Hermite-coefficient proof, since the same Hermite machinery is
used later in Section~\ref{sec: pca}.

Since $h\in L^2(\gamma)$, we can write the Hermite polynomial expansion of $h_i$ ($1\leq i\leq d$) as
\begin{align}\label{equ: h hermite}
    h_i(\bfx) = \sum_{\bfk\in\bbN^d} c_{i,\bfk} \cdot \he_{\bfk}(\bfx),
\end{align}
where the sum is over all the multi-indices in $\bbN^d$ and $\he_{\bfk}(\bfx)=\prod_{i=1}^d \he_{k_i}(x_i)$ is the product of univariate Hermite polynomials. We use the normalized probabilist's Hermite polynomials, which satisfy
\begin{align*}
    \EE[\gamma]{\he_{\bfk}(\bfx)\cdot \he_{\bfk'}(\bfx) } = \delta_{\bfk,\bfk'}.
\end{align*}
For $\bfk\in\bbN^d$, let $|\bfk|=\sum_{i=1}^d k_i$, $\bfk!=\prod_{i=1}^d k_i!$, and $\theta^{\bfk}=\prod_{i=1}^d \theta_i^{k_i}$.

Substituting the Hermite expansion of $h$ into Equation~\eqref{equ: opt cond psi} and setting  $\psi=\he_{m}$ gives
\begin{align*}
    0 &= \EE[\gamma]{\langle \theta, h(\bfx)\rangle \cdot \he_m(\theta\tran \bfx) }\\
    &=\sum_{i=1}^d \theta_i \sum_{\bfk\in\bbN^d} c_{i,\bfk} \cdot \EE[\gamma]{\he_{\bfk}(\bfx)\cdot \he_m(\theta\tran \bfx)}.
\end{align*}
By the addition formula of Hermite polynomials~\citep[8.958]{gradshteyn2014table} 
\begin{align*}
    \he_{m}(\theta\tran \bfx) = \sum_{|\bfj| = m } \sqrt{\frac{m! }{\bfj!} } \, \theta^{\bfj}\cdot \he_{\bfj}(\bfx),
\end{align*}
the last equation becomes
\begin{align*}
    0 &=\sum_{i=1}^d \theta_i \sum_{\bfk\in\bbN^d} c_{i,\bfk} \sum_{|\bfj|=m}\sqrt{\frac{m!}{\bfj!} } \theta^{\bfj} \cdot \EE[\gamma]{\he_{\bfk}(\bfx)\cdot \he_{\bfj}(\bfx) }.
\end{align*}
Since the expectation vanishes unless $\bfk=\bfj$, it reduces to
\begin{align*}
    0=\sum_{i=1}^d \theta_i \sum_{\bfk: |\bfk|=m} c_{i,\bfk} \sqrt{\frac{m!}{\bfk!} } \theta^{\bfk}
    =\sum_{\bfk: |\bfk|=m} \sqrt{\frac{m!}{\bfk!} } \cdot\sum_{i=1}^d c_{i,\bfk} \theta_i \cdot \theta^{\bfk}.
\end{align*}

Write the Hermite expansion of \(r=\log(p/\gamma)\) as
\[
    r(\bfx)=\sum_{\bfk\in\bbN^d} b_{\bfk}\he_{\bfk}(\bfx).
\]
By Gaussian integration by parts and
\(\partial_i\he_{\bfk}=\sqrt{k_i}\he_{\bfk-\bfe_i}\), the coefficients satisfy
\[
    c_{i,\bfk-\bfe_i}
    =
    \EE[\gamma]{\partial_i r(\bfx)\he_{\bfk-\bfe_i}(\bfx)}
    =
    \sqrt{k_i}\,\EE[\gamma]{r(\bfx)\he_{\bfk}(\bfx)}
    =
    \sqrt{k_i}\,b_{\bfk},
\]
with the convention that this term is zero when \(k_i=0\). Substituting this
identity into the preceding display gives
\[
    0
    =
    (m+1)\sum_{|\bfk|=m+1}
    \sqrt{\frac{m!}{\bfk!}}\,
    b_{\bfk}\theta^{\bfk}
\]
for almost every \(\theta\in\mathbb S^{d-1}\). Since this homogeneous polynomial
vanishes on a set of full spherical measure, it vanishes identically. Therefore
\(b_{\bfk}=0\) for every \(|\bfk|=m+1\). Hence the degree-\(m\) Hermite component
of \(h=\nabla r\) vanishes. Repeating this for all \(m\ge0\) gives \(h=0\), and
therefore \(p=\gamma\).

\end{proof}

\subsection{Proof of Proposition~\ref{prop: stationarity guarantee}}
\label{prf: stationarity guarantee}

\begin{proof}
The first part follows from Theorem~\ref{thm: stationary point}.

The algorithm implies $\Delta_{\mfvi}(R_k\# p^{(k-1)})=\kl{\gamma}{p^{(k-1)}} - \kl{\gamma}{p^{(k)}}$. Summing over $k=1,\ldots,K$, 
\begin{align*}
    \sum_{k=1}^K \Delta_{\mfvi}(R_k\# p^{(k-1)}) = \kl{\gamma}{p^{(0)}} - \kl{\gamma}{p^{(K)}} \leq \kl{\gamma}{p}.
\end{align*}
When the rotations are chosen deterministically and greedily, we have $\widebar \Delta_{\mfvi}(p^{(k-1)})=\Delta_{\mfvi}(R_k\# p^{(k-1)}) $, thus $\sum_{k=1}^K \widebar \Delta_{\mfvi}(p^{(k-1)}) \leq  \kl{\gamma}{p}$, which implies that $\widebar\Delta_{\mfvi}(p^{(K)} ) \to 0$ as $K\to\infty$, and $\min_{1\leq k\leq K}\widebar \Delta_{\mfvi}(p^{(k-1)}) = O(\frac{1}{K})$.

When $R_k$ are random, we denote $\delta_{k-1}=\Delta_{\mfvi}(R_k\# p^{(k-1)})$, and write
\begin{align*}
    \widebar\Delta_{\mfvi}(p^{(k-1)}) =\EE[R_k]{\delta_{k-1}}=: \bar\delta_{k-1},
\end{align*}
where the expectation is taken over the randomness in $R_k$; equivalently, $\mathbb{E}_{R_k}[\delta_{k-1}] = \mathbb{E}[\delta_{k-1}|\sigma(R_1,\ldots, R_{k-1})]$ as conditional expectation. 

Since $\sum_{k=1}^K\delta_{k-1}\leq \kl{\gamma}{p}$, taking expectation yields
\begin{align*}
    \EE{\sum_{k=1}^K \bar\delta_{k-1} } = \EE{\sum_{k=1}^K \delta_{k-1} } \leq \kl{\gamma}{p}.
\end{align*}
Since $\bar\delta_{k-1}\geq 0$, by the monotone convergence theorem, $\sum_{k=1}^\infty \bar\delta_{k-1} < \infty$ almost surely, which implies $\bar\delta_{K}\stackrel{a.s.}{\to} 0$ as $K\to\infty$. Moreover, $\EE{\min_{1\leq k\leq K} \bar\delta_{k-1}} \leq \EE{\frac1K \sum_{k=1}^K\bar\delta_{k-1} } = O(\frac{1}{K})$.

\end{proof}

\section{Proofs for Section~\ref{sec: instantaneous}}

\subsection{Proof of Proposition~\ref{prop: pFI gaussian}}
\label{prf: pFI gaussian}
\begin{proof}
The projected FI is
\begin{align*}
    \tI(\gamma, p) = \sum_{i=1}^d \EE{(\EE{\nabla_i \log(p/\gamma)(\bfx)\mid x_i })^2 }=
    \sum_{i=1}^d \EE{ ((1-\omega_i)x_i)^2 }=\sum_{i=1}^d (1-\omega_i)^2,
\end{align*}
where $\omega_i = \Omega_{ii}$.

The MF approximation of $p$ is $q=\N(0,D^{-1})$, where $D=\diag(\Omega)$. We have
\begin{align*}
    \kl{q}{p} &= \frac12 \Big(\tr(\Omega D^{-1}) - d + \log\frac{|\Omega^{-1}|}{|D^{-1}|} \Big) = \frac12 \Big(-\log|\Omega| + \log|D|  \Big)\\
    \kl{\gamma}{p} &= \frac12 \Big(\tr(\Omega) - d - \log|\Omega| \Big).
\end{align*}
Thus,
\begin{align*}
    \Delta_{\mfvi} = \kl{\gamma}{p} - \kl{q}{p} = \frac12 \Big(\tr(\Omega) - d -  \log|D|  \Big)=\frac12\sum_{i=1}^d (\omega_i - \log\omega_i-1 ).
\end{align*}
Let $f(x)=x-\log x-1$. We have
\begin{align*}
    f(x)=f(1)+f'(1)(x-1) + \int_1^x f''(t) (x-t) \rd t = \int_1^x \frac{1}{t^2} (x-t) \rd t.
\end{align*}
Apply $t=1+\tau(x-1)$, we have
\begin{align*}
    f(x)=\int_0^1 \frac{1-\tau}{(1+\tau(x-1))^2} (x-1)^2 \rd \tau.
\end{align*}
Thus,
\begin{align*}
    \frac{x-\log x - 1}{(x-1)^2} = \int_0^1 \frac{1-\tau}{(1+\tau(x-1))^2} \rd \tau.
\end{align*}
Since $1+\tau(x-1)\leq \max(x, 1)$ for $\tau\in[0,1]$, 
\begin{align*}
    \frac{x-\log x - 1}{(x-1)^2} \geq \int_0^1 \frac{1-\tau}{\max(x,1)^2} \rd \tau = \frac{1}{2\max(x,1)^2}.
\end{align*}
Similarly,
\begin{align*}
    \frac{x-\log x - 1}{(x-1)^2} \leq  \frac{1}{2\min(x,1)^2}.
\end{align*}

Since $C_1\leq \omega_i\leq C_2$ for $1\leq i\leq d$, we have
\begin{align*}
    \frac{1}{4\max(C_2, 1)^2} \tI(\gamma,p)\leq \Delta_{\mfvi} \leq \frac{1}{4\min(C_1,1)^2} \tI(\gamma, p).
\end{align*}
\end{proof}

\subsection{Instantaneous analysis}\label{sec: instantaneous analysis}

\paragraph{Continuity equation.}
If $q_\ep=(\mathrm{id} + \ep \bfv)\# q$, then the continuity equation $\partial_\ep q_\ep\big\vert_{\ep=0} + \nabla\cdot(q \bfv)=0$ holds in the distributional sense.
To see this, take any smooth test function $\phi:\R^d\to\R$. Then
\begin{align*}
    \int \phi(\bfy) q_\ep(\bfy) \rd\bfy &= \int \phi(\bfx+\ep \bfv(\bfx)) q(\bfx) \rd\bfx .
\end{align*}
Differentiating both sides with respect to $\ep$ and evaluating at $\ep=0$ gives
\begin{align*}
    \int \phi(\bfy) \partial_\ep q_\ep\;\Big\vert_{\ep=0} \rd\bfy &= \int \langle \nabla\phi(\bfx), \bfv(\bfx)\rangle q(\bfx) \rd\bfx.
\end{align*}
Integrating by parts on the right-hand side gives
\begin{align*}
    \int \phi(\bfy) \partial_\ep q_\ep\;\Big\vert_{\ep=0} \rd\bfy &= -\int \phi(\bfx) \nabla\cdot(q \bfv)(\bfx) \rd\bfx.
\end{align*}
Since this holds for any test function $\phi$, we have $\partial_\ep q_\ep \big\vert_{\ep=0} = -\nabla\cdot(q \bfv)$.

\paragraph{Instantaneous rate of $\kl{q_\ep}{p}$ (Equation~\eqref{equ: kl instantaneous rate}).}
Note that
\begin{align*}  
    \partial_\ep \kl{q_\ep}{p}\;\Big\vert_{\ep=0} &=\partial_\ep \int q_\ep\log\frac{q_\ep}{p}\; \Big\vert_{\ep=0} = \int \Big(\log\frac{\gamma}{p} + 1\Big) \partial_\ep q_\ep\;\Big\vert_{\ep=0}\\
    &=- \int \nabla\cdot(\gamma \bfv) \Big(\log\frac{\gamma}{p} + 1\Big),
\end{align*}
where we have used $\frac{\rd}{\rd q} q\log(q/p) = \log(q/p) + 1$ and the continuity equation. Integrating by parts on the right-hand side gives
\begin{align*}
    \partial_\ep \kl{q_\ep}{p}\;\Big\vert_{\ep=0} &=\int \gamma \bfv\cdot \nabla\Big(\log\frac{\gamma}{p} + 1\Big)  =-\EE[\gamma]{ \big\langle \bfv(\bfx), \nabla\log \frac{p}{\gamma}(\bfx) \big\rangle}.
\end{align*}

\paragraph{Derivation of Equation~\eqref{equ: optimal vector field}.}

The steepest descent direction is the solution to the following optimization problem:
\begin{align*}
&\max_{\bfv}\; \EE[\gamma]{\langle \bfv(\bfx), h(\bfx)\rangle} - \frac12 \|\bfv\|_{L^2(\gamma)}^2 \\
&\text{subject to}\; \text{$\bfv$ is coordinatewise }
\end{align*}
where we denote $h=\nabla\log(p/\gamma)$. Since $\bfv(\bfx)=(v_1(x_1), \ldots, v_d(x_d))$, we have
\begin{align*}
    \EE[\gamma]{\sum_{i=1}^d v_i(x_i) h_i(\bfx)} = \EE[\gamma]{\sum_{i=1}^d v_i(x_i)\EE{h_i(\bfx)\mid x_i} }=\EE[\gamma]{\langle \bfv(\bfx), \bar h(\bfx)\rangle },
\end{align*}
where $\bar h_i(x_i) = \EE{h_i(\bfx)\mid x_i}$. Therefore, the optimal $\bfv$ is $\bar h$.

\subsection{Constrained and weighted Wasserstein gradient flow}
\label{sec: gradient flow derivation}

We give the formal construction underlying Section~\ref{sec: gradient flow}.
Let
\[
    \calE(p)=\kl{\gamma}{p}.
\]
Throughout this subsection, we assume that all densities are positive and sufficiently
smooth, and that the integrations by parts below are justified.

For a distribution $p$, let $R_p$ denote the rotation chosen at $p$, and write
$\tp=R_p\#p$ for the rotated target. Define the operator
\[
    \calR_p \bfv(\bfx)=R_p\tran \bfv(R_p\bfx).
\]
The admissible velocity fields are rotated coordinatewise gradient fields. We therefore
define the constrained tangent space
\[
    \tT_p\calP_2
    =
    \Big\{
    \calR_p\nabla\phi:
    \phi(\bfy)=\sum_{i=1}^d\phi_i(y_i),\;
    \phi_i\in C_c^\infty(\R)
    \Big\}.
\]
We equip this space with the metric
\[
    \tg_p(\calR_p\nabla\phi_1,\calR_p\nabla\phi_2)
    =
    \EE[\bfy\sim\tp]{
    \left\langle
    \nabla\phi_1(\bfy),
    \sfM_p(\bfy)^{-1}\nabla\phi_2(\bfy)
    \right\rangle
    },
\]
where
\[
    \sfM_p(\bfy)
    =
    \diag\left(
    \frac{\tp_1(y_1)}{\gamma_1(y_1)},\ldots,
    \frac{\tp_d(y_d)}{\gamma_d(y_d)}
    \right),
\]
and $\tp_i$ and $\gamma_i$ denote the $i$-th marginal densities of $\tp$ and $\gamma$.
The weighting by $\sfM_p^{-1}$ arises because the local perturbation in
Section~\ref{sec: projected FI} is measured in $L^2(\gamma)$, whereas the
target-side flow is written in terms of the current target distribution.

We now derive the Wasserstein gradient of $\calE$ in this constrained and weighted tangent space.
For a smooth function $\varphi:\R^d\to\R$, define $\calA_p\nabla\varphi=\nabla\phi$,
where $\phi(\bfy)=\sum_{i=1}^d\phi_i(y_i)$ is defined as
\[
    \partial_i\phi_i(y_i)
    =
    \frac{\tp_i(y_i)}{\gamma_i(y_i)}
    \EE[\tp]{
    \partial_{y_i}\left[\varphi(R_p\tran\bfy)\right]\mid y_i
    }.
\]
Thus, $\calA_p$ maps a general gradient field to a coordinatewise gradient field in the
rotated coordinate system, with the marginal reweighting induced by the metric
$\tg_p$.

Let $p_t$ be a smooth curve satisfying the continuity equation
\[
    \partial_t p_t+\nabla\cdot(p_t\bfv_t)=0,
    \qquad
    \bfv_t\in\tT_{p_t}\calP_2.
\]
Then
\[
    \frac{\rd}{\rd t}\calE(p_t)
    =
    \EE[p_t]{
    \left\langle
    \nabla\left(-\frac{\gamma}{p_t}\right),
    \bfv_t
    \right\rangle
    }.
\]
Suppose the velocity field is in the constrained tangent space, i.e. $\bfv_t=\calR_{p_t}\nabla\phi \in \tilde{T}_p \calP_2$. Changing the variable
$\bfy=R_{p_t}\bfx$ and using the tower property of conditional expectation gives
\[
\begin{aligned}
    \frac{\rd}{\rd t}\calE(p_t)
    &=
    \EE[\tp_t]{
    \left\langle
    R_{p_t}\nabla\left(-\frac{\gamma}{p_t}\right)(R_{p_t}\tran\bfy),
    \nabla\phi(\bfy)
    \right\rangle
    }                                                   \\
    &=
    \tg_{p_t}
    \left(
    \calR_{p_t}\calA_{p_t}\nabla\left(-\frac{\gamma}{p_t}\right),
    \calR_{p_t}\nabla\phi
    \right).
\end{aligned}
\]
Therefore, the constrained and weighted Wasserstein gradient of $\calE$ at $p$ is
\[
    \widetilde\nabla \calE(p)
    =
    \calR_p\calA_p\nabla\left(-\frac{\gamma}{p}\right),
\]
and the corresponding gradient flow is
\[
    \partial_t p_t
    =
    \nabla\cdot
    \left(
    p_t\,
    \calR_{p_t}\calA_{p_t}
    \nabla\left(-\frac{\gamma}{p_t}\right)
    \right).
\]

It remains to identify the energy dissipation. Since $\gamma$ is invariant under
rotations,
\[
    \left(-\frac{\gamma}{p_t}\right)(R_{p_t}\tran\bfy)
    =
    -\frac{\gamma(\bfy)}{\tp_t(\bfy)}.
\]
Moreover,
\[
    \partial_{y_i}\left(-\frac{\gamma}{\tp_t}\right)(\bfy)
    =
    \frac{\gamma(\bfy)}{\tp_t(\bfy)}
    \partial_{y_i}\log\frac{\tp_t}{\gamma}(\bfy).
\]
It follows from the definition of $\calA_{p_t}$ that
\[
    \left[
    \calA_{p_t}\nabla\left(-\frac{\gamma}{p_t}\right)
    \right]_i(y_i)
    =
    \EE[\gamma]{
    \partial_{y_i}\log\frac{\tp_t}{\gamma}(\bfy)
    \mid y_i
    }.
\]
Therefore,
\begin{align*}
    -\left\|
    \widetilde\nabla\calE(p_t)
    \right\|_{\tg_{p_t}}^2 &= -\tg_{p_t}\left( \calR_{p_t} \calA_{p_t} \nabla(-\frac{\gamma}{p_t}), \calR_{p_t} \calA_{p_t} \nabla(-\frac{\gamma}{p_t}) \right) \\
    &=-\EE[\tp_t]{\left\langle \calA_{p_t} \nabla(-\frac{\gamma}{p_t}), \mathsf{M}_{p_t}^{-1} \calA_{p_t} \nabla(-\frac{\gamma}{p_t}) \right\rangle}\\
    &=-\sum_{i=1}^d \EE[\tp_t]{\frac{\gamma_i(y_i)}{\tp_{t,i}(y_i)} \Bigl(\EE[\tp_t]{ \frac{\tp_{t,i}(y_i)}{\gamma_i(y_i)} \partial_{y_i}(-\frac{\gamma}{\tp_t})(\bfy) \mid y_i}\Bigr)^2}\\
    &=-\sum_{i=1}^d \EE[\gamma_i]{\Bigl(\EE[\tp_t]{\frac{\tp_{t,i}}{\gamma_i}(\frac{\gamma}{\tp_t})\partial_{y_i}(\log\frac{\tp_t}{\gamma}) (\bfy)  \mid y_i} \Bigr)^2 }\\
    &=-\sum_{i=1}^d \EE[\gamma_i]{\Bigl(\EE[\gamma]{\partial_{y_i}\log\frac{\tp_t}{\gamma}( \bfy) \mid y_i }\Bigr)^2 }\\
    &= -\tI(\gamma, \tp_t).
\end{align*}
Consequently, along the constrained and weighted gradient flow,
\[
    \frac{\rd}{\rd t}\kl{\gamma}{p_t}
    =
    -\left\|
    \widetilde\nabla\calE(p_t)
    \right\|_{\tg_{p_t}}^2
    =
    -\tI(\gamma,R_{p_t}\#p_t).
\]

\subsection{Discussion on the Polyak--{\L}ojasiewicz (PL) inequality}
\label{LSI discussion}
Along this constrained and weighted Wasserstein gradient flow, if the
PL-type inequality
$\tI(\gamma,R_p\# p) \geq c\,\kl{\gamma}{p}$
holds for some constant $c>0$, then the KL divergence decays exponentially
fast along the flow. Such an inequality does not hold in general. However, it
does hold under the following two conditions: first, a log-Sobolev-type
inequality along the flow,
$\calD_F(\gamma,p_t)\geq c_1\kl{\gamma}{p_t}$ for $t\geq0$;
and second, a lower bound on the projected Fisher information in terms of the
Fisher divergence,
$\tI(\gamma,p_t)\geq c_2\calD_F(\gamma,p_t)$.
Together, these imply the PL inequality with $c=c_1c_2$.

In the following, we provide a lower bound on the average projected FI $\bar I(\gamma, p)=\EE[R]{\tI(\gamma, R\# p) }$ in terms of the Fisher divergence. The rotation is assumed to be drawn uniformly at random.

Suppose \(r=\log(p/\gamma)\in L^2(\gamma)\) and
\(h=\nabla\log(p/\gamma)\in L^2(\gamma)\), and let \(h\) have the Hermite
expansion given in Equation~\eqref{equ: h hermite}.
Let $h^{(m)}_{i}(\bfx) = \sum_{\bfk\in\bbN^d:|\bfk|=m} c_{i,\bfk} \he_\bfk(\bfx)$.

\begin{thm}\label{thm: average projected FI lower bound}
    Suppose that there exists $M\geq0$ and $\delta\in(0,1)$ such that
    \begin{align*}
        \sum_{m=0}^M\|h^{(m)} \|_{L^2(\gamma)}^2 \geq (1-\delta )\cdot \|h\|_{L^2(\gamma)}^2. 
    \end{align*}
    Then there exists $C=C(d,\delta, M)>0$ such that
    \begin{align*}
        \bar I(\gamma, p) \geq C \cdot \calD_F(\gamma, p).
    \end{align*}
\end{thm}
\begin{proof}
From the proof in~\ref{prf: pFI decomposition}, we have $\tI(\gamma, p_R) =  \sum_{m\geq 0} \sum_{i=1}^d \big(a^{(m)}(R_i)\big)^2$, 
where
\begin{align*}
    a^{(m)}(\theta)&=\sum_{\bfk: |\bfk|=m+1}\sum_{i=1}^d \sqrt{\frac{m!}{(\bfk-\bfe_i)!} } c_{i,\bfk-\bfe_i} \cdot \theta^{\bfk}.
\end{align*}
Taking the average over uniformly random $R$, we have
\begin{align*}
    \bar I(\gamma, p) &=d\cdot \sum_{m\geq 0} \EE[\theta]{\big(a^{(m)}(\theta)\big)^2 },
\end{align*}
where the expectation is over $\theta\sim \mathrm{Unif}(\bbS^{d-1})$.

Write the degree-(m+1) Hermite component of $r=\log(p/\gamma)$ as
\[
    r^{(m+1)}(x)
    =
    \sum_{|\mathbf j|=m+1}
    b_{\mathbf j}\he_{\mathbf j}(x).
\]
Then we have $c_{i,\mathbf j-\bfe_i} = b_{\mathbf j}\sqrt{j_i}$, with the convention that the term is zero when \(j_i=0\). 
Hence, we have
\[
    \|h^{(m)}\|_{L^2(\gamma)}^2
    =
    \sum_{i=1}^d
    \sum_{|\mathbf j|=m+1}
    b_{\mathbf j}^2 j_i
    =
    (m+1)\sum_{|\mathbf j|=m+1} b_{\mathbf j}^2,
\]
and
\[
\begin{aligned}
    a^{(m)}(\theta)
    &=
    \sqrt{m+1}
    \sum_{|\mathbf j|=m+1}
    b_{\mathbf j}
    \sqrt{\frac{(m+1)!}{\mathbf j!}}
    \theta^{\mathbf j}.
\end{aligned}
\]

Applying Lemma~\ref{lem: sphereical integral} with \(n=m+1\) and \(u_{\mathbf j}=b_{\mathbf j}\), we obtain
\[
\begin{aligned}
    \mathbb E_\theta[a^{(m)}(\theta)^2]
    &\ge
    (m+1)
    \frac{(m+1)!}{d(d+2)\cdots(d+2m)}
    \sum_{|\mathbf j|=m+1}b_{\mathbf j}^2  \\
    &=
    \frac{(m+1)!}{d(d+2)\cdots(d+2m)}
    \|h^{(m)}\|_{L^2(\gamma)}^2 .
\end{aligned}
\]
Thus
\[
    d\,\mathbb E_\theta[a^{(m)}(\theta)^2]
    \ge
    \kappa_{d,m}
    \|h^{(m)}\|_{L^2(\gamma)}^2,
\]
where
\[
    \kappa_{d,m}
    :=
    \frac{d(m+1)!}{d(d+2)\cdots(d+2m)}
    =
    \frac{(m+1)!}{(d+2)(d+4)\cdots(d+2m)} ,
\]
with the empty product interpreted as \(1\) when \(m=0\).

Consequently,
\[
    \bar I(\gamma,p)
    \ge
    \sum_{m=0}^M
    \kappa_{d,m}
    \|h^{(m)}\|_{L^2(\gamma)}^2 .
\]
Let $\kappa_{d,M}^*:=\min_{0\le m\le M}\kappa_{d,m}$.
By the assumption of the theorem,
\[
    \sum_{m=0}^M
    \|h^{(m)}\|_{L^2(\gamma)}^2
    \ge
    (1-\delta)\|h\|_{L^2(\gamma)}^2.
\]
Since
$\mathcal D_F(\gamma,p)=\|h\|_{L^2(\gamma)}^2$,
we conclude that
\[
    \bar I(\gamma,p)
    \ge
    (1-\delta)\kappa_{d,M}^*
    \mathcal D_F(\gamma,p).
\]
Therefore the theorem holds with $C(d,\delta,M)=(1-\delta)\kappa_{d,M}^*>0$.
\end{proof}

\begin{lem}\label{lem: sphereical integral}
Let \(n\ge 0\) and let
\[
    P(\theta)
    =
    \sum_{|\mathbf j|=n}
    u_{\mathbf j}
    \sqrt{\frac{n!}{\mathbf j!}}\,
    \theta^{\mathbf j},
    \qquad \theta\in \mathbb S^{d-1}.
\]
Then
\[
    \mathbb E_{\theta\sim \mathrm{Unif}(\mathbb S^{d-1})}
    \big[P(\theta)^2\big]
    \ge
    \frac{n!}{d(d+2)\cdots(d+2n-2)}
    \sum_{|\mathbf j|=n}u_{\mathbf j}^2.
\]
\end{lem}
\begin{proof}
Let \(Z\sim N(0,I_d)\). We can write $Z=\rho\Theta$, where
\(\Theta\sim \mathrm{Unif}(\mathbb S^{d-1})\), \(\rho\) is independent of \(\Theta\), and $\rho^2\sim \chi_d^2$.
Since \(P\) is homogeneous of degree \(n\),
$P(Z)=P(\rho\Theta)=\rho^n P(\Theta)$,
and thus
\[
    \mathbb E[P(Z)^2]
    =
    \mathbb E[\rho^{2n}]\cdot \mathbb E[P(\Theta)^2].
\]
Since $\mathbb E[\rho^{2n}]=d(d+2)\cdots(d+2n-2)$, we have
\[
    \mathbb E[P(\Theta)^2]
    =
    \frac{\mathbb E[P(Z)^2]}
    {d(d+2)\cdots(d+2n-2)}.
\]

It remains to lower bound \(\mathbb E[P(Z)^2]\). 
For each multi-index \(\mathbf j\) with \(|\mathbf j|=n\), the normalized Hermite polynomial \(\he_{\mathbf j}\) has leading term
$\frac{Z^{\mathbf j}}{\sqrt{\mathbf j!}}$.
Therefore, we can write
\[
    \sqrt{\frac{n!}{\mathbf j!}}Z^{\mathbf j}
    =
    \sqrt{n!}\,\he_{\mathbf j}(Z)
    +
    q_{\mathbf j}(Z),
\]
where $q_{\mathbf j}$ is a polynomial of degree at most \(n-1\), and thus is orthogonal to $\he_{\bfj}$.

By the orthogonality of Hermite polynomials, we have
\[
\begin{aligned}
    \mathbb E[P(Z)^2]
    &=
    n!\,
    \mathbb E\Big[
        \Big(
        \sum_{|\mathbf j|=n}
        u_{\mathbf j}\he_{\mathbf j}(Z)
        \Big)^2
    \Big]
    +
    \mathbb E\Big[ \Big(\sum_{|\mathbf j|=n} u_{\mathbf j} q_{\mathbf j}(Z)\Big)^2 \Big]  
    \ge
    n!
    \sum_{|\mathbf j|=n}u_{\mathbf j}^2.
\end{aligned}
\]
This proves the claim.
\end{proof}

\section{Proofs for Section~\ref{sec: pca}}

\subsection{Proof of Theorem~\ref{thm: pFI decomposition}}
\label{prf: pFI decomposition}

\begin{proof}
Let $h(\bfx)=\nabla\log(p/\gamma)(\bfx)$, and suppose $h_{i}$ has the Hermite expansion
\[
h_{i}(\bfx) = \sum_{m\geq 0}\; \sum_{\bfk\in\bbN^d:|\bfk|=m} c_{i,\bfk} \he_\bfk(\bfx) .
\]
Let $h^{(m)}_{i}(\bfx) = \sum_{\bfk\in\bbN^d:|\bfk|=m} c_{i,\bfk} \he_\bfk(\bfx)\in\calV^{(m)}$, where $\calV^{(m)} = \big\{\sum_{\bfk:|\bfk|=m}\alpha_{\bfk} \he_{\bfk},\; \alpha_{\bfk}\in\R \big\}$.
Denote $h_R(\bfx)=\nabla\log(p_R/\gamma)(\bfx)=R h(R\tran\bfx)$ and $h_{R,i}^{(m)}(\bfx) = R_i\tran h^{(m)}(R\tran \bfx)$. We have $h_{R,i}^{(m)} \in \calV^{(m)}$ since Hermite polynomials are closed under rotation (e.g. by the addition formula).
We can write
\begin{align*}
    \EE{h_{R,i}(\bfx)\mid x_i} = h_{R,i}^{(0)} + \sum_{m\geq1} \EE{h_{R,i}^{(m)}(\bfx)\mid x_i}.
\end{align*}
Using the orthogonality of Hermite polynomials, for $|\bfk|=m\geq 1$, 
\begin{align*}
    \EE{\he_{\bfk}(\bfx) \mid x_i } = \begin{cases}
        \he_{k_i}(x_i), & \text{if } \bfk_{- i} = 0,\\
        0, & \text{otherwise}.
    \end{cases}
\end{align*}
Hence, for $m\geq 1$, $\EE{h_{R,i}^{(m)}(\bfx)\mid x_i}$ must be proportional to $\he_{m}(x_i)$, so we can write
\begin{align*}
    \EE{h_{R,i}^{(m)}(\bfx)\mid x_i} = a_{R,i}^{(m)} \cdot \he_{m}(x_i),
\end{align*}
where
\begin{align*}
    a_{R,i}^{(m)} &= \EE{h_{R,i}^{(m)}(\bfx) \he_{m}(x_i)} =\EE{R_i\tran h^{(m)}(R\tran \bfx) \he_{m}(x_i) }\\
    &=\EE{R_i\tran h^{(m)}(\bfx) \he_{m}(R_i\tran \bfx) }=: a^{(m)}(R_i).
\end{align*}
Thus
\begin{align*}
    \EE{h_{R,i}(\bfx)\mid x_i} = h_{R,i}^{(0)} + \sum_{m\geq 1} a^{(m)}(R_i) \he_{m}(x_i).
\end{align*}
Computing the second moment and using the orthogonality of Hermite polynomials gives
\begin{align*}
    \EE{\big(\EE{h_{R,i}(\bfx)\mid x_i}\big)^2} &= \big(h_{R,i}^{(0)}\big)^2 + \sum_{m\geq1} \big(a^{(m)}(R_i)\big)^2 .
\end{align*}
Summing over $i$ gives
\begin{align*}
    \tI(\gamma, p_R) = \sum_{i=1}^d \EE{\big(\EE{h_{R,i}(\bfx)\mid x_i}\big)^2} = \sum_{i=1}^d \big(h_{R,i}^{(0)}\big)^2 + \sum_{m\geq 1} \sum_{i=1}^d \big(a^{(m)}(R_i)\big)^2.
\end{align*}
Note that 
\begin{align*}
    \sum_{i=1}^d \big(h_{R,i}^{(0)}\big)^2 = \big\| \EE[\gamma]{h_{R}(\bfx)} \big\|_2^2 = \big\|R \EE[\gamma]{h(R\tran\bfx)}\big\|_2^2 = \big\|\EE[\gamma]{h(\bfx)} \big\|_2^2=:I^{(1)},
\end{align*}
which does not depend on $R$. This proves that
\begin{align*}
    \tI(\gamma, p_R) = I^{(1)} + \sum_{m\geq 1} \sum_{i=1}^d \big(a^{(m)}(R_i)\big)^2.
\end{align*}
It remains to show that $a^{(m)}(\theta)= \langle \calA^{(m)}, \theta^{\otimes m}\rangle$ for a unit vector $\theta$. Using the addition formula of Hermite polynomials, we have
\begin{align*}
    a^{(m)}(\theta)&= \EE{\theta\tran h^{(m)}(\bfx) \he_m(\theta\tran\bfx)}\\
    &=\sum_{i=1}^d \theta_i \sum_{\bfk\in\bbN^d} c_{i,\bfk} \sum_{|\bfj|=m}\sqrt{\frac{m!}{\bfj!} } \theta^{\bfj} \cdot \EE[\gamma]{\he_{\bfk}(\bfx)\cdot \he_{\bfj}(\bfx) }\\
    &=\sum_{i=1}^d \sum_{\bfk: |\bfk|=m} \sqrt{\frac{m!}{\bfk!} }  c_{i,\bfk} \theta^{\bfk+\bfe_i}\\
    &=\sum_{\bfk: |\bfk|=m+1}\sum_{i=1}^d \sqrt{\frac{m!}{(\bfk-\bfe_i)!} } c_{i,\bfk-\bfe_i} \cdot \theta^{\bfk},
\end{align*}
where we use the convention that $c_{i,\bfk-\bfe_i}=0$ if $k_i=0$.
Let $A^{(m+1)}$ be a vector indexed by $\bfk\in\bbN^d$ with $|\bfk|=m+1$ defined as
\begin{align*}
    A^{(m+1)}_{\bfk} =\sum_{i=1}^d \sqrt{\frac{m!}{(\bfk-\bfe_i)!} } \cdot c_{i,\bfk-\bfe_i}.
\end{align*}
Let $\calA^{(m+1)}$ be the symmetric $(m+1)$-order tensor defined as 
\begin{align*}
    \calA^{(m+1)}_{j_1,\ldots,j_{m+1}} = \frac{\bfk!}{(m+1)!} A^{(m+1)}_{\bfk},
    \quad \text{where } \bfk = \sum_{l=1}^{m+1} \bfe_{j_l}.
\end{align*}
Then we can write 
\begin{align*}
    a^{(m)}(\theta)=\sum_{\bfk:|\bfk|=m+1} A^{(m+1)}_{\bfk} \theta^{\bfk} = \sum_{j_1,\ldots,j_{m+1}=1}^d \calA^{(m+1)}_{j_1,\ldots,j_{m+1}} \theta_{j_1}\cdots \theta_{j_{m+1}} = \langle \calA^{(m+1)}, \theta^{\otimes (m+1)}\rangle.
\end{align*}
Note that $h_i(\bfx)=\partial_{x_i} r(\bfx)$ where $r(\bfx)=\log(p/\gamma)(\bfx)$. Since $ \he_{\bfk}(\bfx) =\frac{(-1)^{|\bfk|}}{\sqrt{\bfk!} } \frac{1}{\gamma(\bfx)} \partial^{\bfk} \gamma(\bfx)$, we have
\begin{align*}
    c_{i,\bfk-\bfe_i}&=\EE[\gamma]{h_i(\bfx) \he_{\bfk-\bfe_i}(\bfx)} = \EE[\gamma]{\partial_{x_i} r(\bfx) \he_{\bfk-\bfe_i}(\bfx)} \\
    &=\int \partial_{x_i} r(\bfx)\cdot (-1)^{|\bfk| - 1} \frac{1}{\sqrt{(\bfk-\bfe_i)!}}  \partial^{\bfk-\bfe_i} \gamma(\bfx)  \rd\bfx\\
    &=(-1)^{|\bfk|} \frac{\sqrt{k_i}}{\sqrt{\bfk!}} \int r(\bfx) \partial^{\bfk}\gamma(\bfx)\rd \bfx\\
    &=\sqrt{k_i} \EE[\gamma]{r(\bfx) \he_{\bfk}(\bfx)}.
\end{align*}
Therefore,
\begin{align*}
    A^{(m+1)}_{\bfk}&=\sum_{i=1}^d \frac{\sqrt{m!}}{\sqrt{(\bfk-\bfe_i)!}} \sqrt{k_i}\EE[\gamma]{r(\bfx)\he_{\bfk}(\bfx)}=\sum_{i=1}^d \frac{\sqrt{m!}}{\sqrt{\bfk!}} k_i \EE[\gamma]{r(\bfx)\he_{\bfk}(\bfx)}\\
    &=(m+1) \frac{\sqrt{m!}}{\sqrt{\bfk!}} \EE[\gamma]{r(\bfx)\he_{\bfk}(\bfx)}.
\end{align*}
Therefore,
\begin{align*}
    \calA^{(m+1)}_{j_1,\ldots,j_{m+1}} &= \frac{\bfk!}{(m+1)!} A^{(m+1)}_{\bfk} = \frac{\sqrt{\bfk!}}{\sqrt{m!}} \EE[\gamma]{r(\bfx)\he_{\bfk}(\bfx)}.
\end{align*}
Replacing $m+1$ by $m$ gives the stated formula for $I^{(m)}$.

Furthermore, if $r$ is $m+1$ times differentiable and the derivatives are square-integrable, we have
\begin{align*}
\EE[\gamma]{r(\bfx) \he_{\bfk}(\bfx)} &=\frac{1}{\sqrt{\bfk!}} \int r(\bfx) (-1)^{|\bfk|}\partial^{\bfk}\gamma(\bfx) \rd\bfx  = \frac{1}{\sqrt{\bfk!}}\EE[\gamma]{\partial^{\bfk} r(\bfx)},
\end{align*}
by integration by parts. Hence,
\begin{align*}
    \calA^{(m+1)}_{j_1,\ldots,j_{m+1}} = \frac{1}{\sqrt{m!}} \EE[\gamma]{\partial^{\bfk} r(\bfx)}
\end{align*}

\end{proof}

\subsection{Proof of Proposition~\ref{prop: rotated product exact recovery}}
\label{prf: rotated product exact recovery}

\begin{proof}
First note that
\begin{align*}
    H(\tp) = \EE[\gamma]{\bfx (\nabla\log (\tp/\gamma)(\bfx) )\tran }
\end{align*}
is a diagonal matrix because $\tp$ is a product distribution. Moreover, $H(p) = H(Q\# \tp) = Q H(\tp) Q\tran $. Since the diagonal entries of $H(\tp)$ are distinct by assumption, the eigenvectors of $H(p)$ are exactly the columns of $Q$, up to permutation and sign. Therefore, relative score PCA recovers the correct rotation.
\end{proof}

\subsection{Maximizing higher-degree contribution}
\label{sec: tensor pca}

We describe a simple and popular heuristic based on tensor unfolding for approximately solving the problem 
\begin{align}\label{equ: tensor pca}
\max_{R\in O(d)} \sum_{i=1}^d \langle\calA, R_i^{\otimes m} \rangle^2 
\end{align}
for a symmetric order-$m$ tensor $\calA$. 
We then show that, for the tensor $\calA^{(m)}$ in Theorem~\ref{thm: pFI decomposition}, this heuristic can be implemented efficiently without forming the full tensor. 
More advanced tensor decomposition algorithms are available~\citep{montanari2014statistical,ma2016polynomial,hopkins2016fast}, but pursuing them is beyond the scope of this paper.

The mode-1 unfolding of $\calA\in \otimes^m \R^d$ is the matrix $\calA_{(1)}\in\R^{d\times d^{m-1}}$ whose $i$-th row is the vectorization of the slice $\calA_{i,\cdot,\ldots,\cdot}$. A heuristic solution to \eqref{equ: tensor pca} is given by the top eigenvectors of the matrix $M=\calA_{(1)}\calA_{(1)}\tran$. 
The following proposition shows that this heuristic is exact for orthogonally decomposable (odeco) tensors and that $M$ can be estimated from random contractions of $\calA$, avoiding the explicit construction of $\calA_{(1)}$.
\begin{prop}\label{prop: tensor pca odeco}
    Suppose \(m\ge2\) and
    \(\calA = \sum_{j=1}^r \lambda_j \bfv_j^{\otimes m}\) for some orthonormal
    vectors \(\bfv_j\in\R^d\) and scalars \(\lambda_j\). Then
    \(M=\calA_{(1)}\calA_{(1)}\tran\) has eigenspaces
    \(\operatorname{span}\{\bfv_j:\lambda_j^2=\lambda^2\}\) corresponding to the
    distinct values of \(\lambda^2\). In particular, if the values
    \(\lambda_j^2\) are distinct, the top \(r\) eigenvectors of \(M\) coincide
    with \(\{\bfv_j\}_{j=1}^r\), up to signs and permutations, and the
    corresponding eigenvalues are \(\lambda_j^2\).
    
    Moreover, let $g(\bfa)=\calA(\cdot,\bfa,\ldots,\bfa)=\calA_{(1)} \mathrm{vec}(\bfa^{\otimes (m-1)})$. Then $M = c\cdot \Cov[\bfa\sim\gamma]{g(\bfa)}$ for some constant $c>0$.
\end{prop} 
\begin{proof}
First, the mode-1 unfolding of $\calA$ can be written as
\begin{align*}
    \calA_{(1)} = \sum_{j=1}^r \lambda_j \bfv_j \cdot \mathrm{vec}(\bfv_j^{\otimes (m-1)})\tran.
\end{align*}
Since $\bfv_j$ are orthonormal, we have
\begin{align*}
    M=\calA_{(1)}\calA_{(1)}\tran = \sum_{j=1}^r \lambda_j^2 \bfv_j \bfv_j\tran.
\end{align*}
This proves the first claim.

For the second claim, note that 
\begin{align*}
g(\bfa) = \sum_{j=1}^r \lambda_j \bfv_j\cdot(\bfv_j^{\otimes(m-1)})\tran \mathrm{vec}(\bfa^{\otimes(m-1)})= \sum_{j=1}^r \lambda_j (\bfv_j\tran \bfa)^{m-1} \bfv_j. 
\end{align*}
Denote $z_j=\bfv_j\tran\bfa$. Then
\begin{align*}
    \Cov{g(\bfa)} &= \EE{g(\bfa) g(\bfa) \tran }-\EE{g(\bfa)} \EE{g(\bfa)}\tran \\
    &=\sum_{k,\ell=1}^r \lambda_k \lambda_\ell \EE{z_k^{m-1} z_\ell^{m-1}} \bfv_k \bfv_\ell\tran - \sum_{k,\ell=1}^r \lambda_k \lambda_\ell \EE{z_k^{m-1}} \EE{z_\ell^{m-1}} \bfv_k \bfv_\ell\tran \\
    &=\sum_{k,\ell=1}^r \lambda_k \lambda_\ell \Cov{z_k^{m-1}, z_\ell^{m-1}} \bfv_k \bfv_\ell\tran.
\end{align*}
Since $z_1,\ldots,z_r\iid\N(0,1)$,
$\Cov{z_k^{m-1}, z_\ell^{m-1}}=0$ for $k\neq \ell$. 
Therefore,
\begin{align*}
    \Cov{g(\bfa)} = \Var{z_1^{m-1}} \cdot \sum_{k=1}^r \lambda_k^2\bfv_k \bfv_k\tran.
\end{align*}
Therefore, $M=c\cdot \Cov{g(\bfa)}$ for $c=1/\Var{z^{m-1}}$ where $z\sim\N(0,1)$.
\end{proof}

Proposition~\ref{prop: tensor pca odeco} suggests an efficient randomized heuristic: for independent draws $\bfa\sim\gamma$, compute the mode-1 contractions $\calA(\cdot,\bfa,\ldots,\bfa)$ and use the principal components of the resulting vectors to approximately solve the tensor PCA problem.

\paragraph{Computing the mode-1 contraction for $\calA^{(m)}$}
We now describe how to compute the mode-1 contraction for $\calA^{(m)}$ in Theorem~\ref{thm: pFI decomposition} without explicitly constructing the full tensor $\calA^{(m)}$. 

For a tuple $\boldsymbol{\ell}=(\ell_1,\ldots,\ell_m)\in[d]^m$, let
$\mathrm{cnt}(\boldsymbol{\ell})\in\bbN^d$ denote the vector of index counts:
\begin{align*}
    \mathrm{cnt}(\boldsymbol{\ell}) = (k_1,\ldots,k_d),\text{ where } k_i = \mathrm{card}\big\{k\in[m] : \ell_k=i \big\}.
\end{align*}
The entries of $\calA^{(m)}$ can be rewritten as
\begin{align*}
    \calA^{(m)}_{\ell_1,\ldots,\ell_m} = \frac{\sqrt{\bfk!}}{\sqrt{(m-1)!}} \EE[\gamma]{r(\bfx) \he_{\bfk}(\bfx)},\;\;\text{where}\;\; \bfk=\mathrm{cnt}(\boldsymbol{\ell}).
\end{align*}

The following result shows that $g(\bfa)=\calA^{(m)}(\cdot,\bfa,\ldots,\bfa)$ can be expressed as an expectation involving $h(\bfx)$ and $\he_{m-1}(\bfa\tran\bfx)$. Hence it can be estimated by sampling from $\gamma$, without explicitly constructing the tensor $\calA^{(m)}$.
\begin{prop}\label{pro: contraction of A}
    For any unit vector $\theta\in\R^d$ and $m\geq 2$,
    \begin{align*}
        g(\theta) = \EE[\gamma]{h(\bfx) \he_{m-1}(\theta\tran\bfx)}.
    \end{align*}
    Consequently, for any nonzero vector $\bfa\in\R^d$, 
    \begin{align*}
        g(\bfa)= \|\bfa\|^{m-1} g(\frac{\bfa}{\|\bfa\|}) = \|\bfa\|^{m-1}\EE[\gamma]{h(\bfx)\he_{m-1}(\frac{\bfa\tran\bfx}{\|\bfa\|} )}.
    \end{align*}
\end{prop}
\begin{proof}
By the addition formula for Hermite polynomials, 
\begin{align*}
    \EE[\gamma]{h_i(\bfx) \he_{m-1}(\theta\tran\bfx) } &= \EE[\gamma]{h_i(\bfx) \sum_{\bfj:|\bfj|=m-1} \sqrt{\frac{(m-1)!}{\bfj!}} \theta^{\bfj} \he_{\bfj}(\bfx) }\\
    &=\sum_{\bfj:|\bfj|=m-1} \sqrt{\frac{(m-1)!}{\bfj!}} \theta^{\bfj}\cdot \EE[\gamma]{\partial_{x_i} r(\bfx) \he_{\bfj}(\bfx)}\\
    &=\sum_{\bfj:|\bfj|=m-1} \sqrt{\frac{(m-1)!}{\bfj!}} \theta^{\bfj} \cdot \sqrt{j_i+1} \EE[\gamma]{r(\bfx) \he_{\bfj+\bfe_i}(\bfx)},
\end{align*}
where the last equality follows from $\EE[\gamma]{\partial_{x_i}r(\bfx) \he_{\bfj}(\bfx) } = \sqrt{j_i+1}\EE[\gamma]{r(\bfx) \he_{\bfj+\bfe_i}(\bfx)} $. 
Setting $\bfk=\bfj+\bfe_i$ in the last display, we get
\begin{align*}
    \EE[\gamma]{h_i(\bfx) \he_{m-1}(\theta\tran\bfx) } 
    &=\sum_{\bfk:|\bfk|=m} \sqrt{\frac{(m-1)!}{(\bfk-\bfe_i)! }} \theta^{\bfk-\bfe_i} \sqrt{k_i} \EE[\gamma]{r(\bfx) \he_{\bfk}(\bfx)}\\
    &=\sum_{\bfk:|\bfk|=m} \sqrt{\frac{(m-1)!}{\bfk! }} \theta^{\bfk-\bfe_i} k_i \EE[\gamma]{r(\bfx) \he_{\bfk}(\bfx)}.
\end{align*}

For every $\bfk\in\bbN^d$ with $|\bfk|=m$, there are $\frac{(m-1)!}{(\bfk-\bfe_i)!} $ tuples $\boldsymbol{\ell}=(\ell_1,\ldots,\ell_m)\in[d]^m$ such that $\ell_1=i$ and $\mathrm{cnt}(\boldsymbol{\ell})=\bfk$. Therefore,
\begin{align*}
    \EE[\gamma]{h_i(\bfx) \he_{m-1}(\theta\tran\bfx) } 
    &=\sum_{\bfk:|\bfk|=m}\;  \sum_{\boldsymbol{\ell}:\ell_1=i, \mathrm{cnt}(\boldsymbol{\ell})=\bfk} \frac{(\bfk-\bfe_i)!}{(m-1)!}\sqrt{\frac{(m-1)!}{\bfk! }} \theta^{\bfk-\bfe_i} k_i \EE[\gamma]{r(\bfx) \he_{\bfk}(\bfx)}.
\end{align*}
In the summand, we have $\theta^{\bfk-\bfe_i}=\theta_{\ell_2}\cdots\theta_{\ell_m}$ since $\ell_1=i$. 
Substituting into the last display and plugging in the definition of $\calA^{(m)}$, we get
\begin{align*}
    \EE[\gamma]{h_i(\bfx) \he_{m-1}(\theta\tran\bfx) } 
    &=\sum_{\bfk:|\bfk|=m} \; \sum_{\boldsymbol{\ell}:\ell_1=i, \mathrm{cnt}(\boldsymbol{\ell})=\bfk} \frac{(\bfk-\bfe_i)!}{(m-1)!} \frac{(m-1)!}{\bfk!}k_i \calA_{\ell_1,\ldots,\ell_m} \theta_{\ell_2}\ldots\theta_{\ell_m}\\
    &=\sum_{\ell_2,\ldots,\ell_m=1}^d \calA_{i,\ell_2,\ldots,\ell_m} \theta_{\ell_2}\cdots \theta_{\ell_m} = g(\theta)_i.
\end{align*}
Therefore, $g(\theta)_i=\EE[\gamma]{h_i(\bfx) \he_{m-1}(\theta\tran\bfx)}$.
\end{proof}

We summarize the procedure for approximately maximizing the degree-$m$ contribution in Algorithm~\ref{algo: tensor pca}.
\begin{algorithm}
\caption{Approximate maximization of degree-$m$ contribution}
\label{algo: tensor pca}
\begin{algorithmic}
    \REQUIRE{Target distribution $p$; degree $m$; number of PCs $r$; Monte Carlo sample size $N_1,N_2$}
    \STATE{Generate $\bfa_i\iid \N(0,I_d)$ for $1\leq i\leq N_1$}
    \vspace{.5em}
    \STATE{For each $1\leq i\leq N_1$, compute $g(\bfa_i) = \frac{\|\bfa_i\|^{m-1}}{N_2} \sum_{j=1}^{N_2}  h(\bfx_j) \he_{m-1}(\frac{\bfa_i\tran\bfx_j}{\|\bfa_i\|})$ where $\bfx_j\iid \N(0,I_d)$ for $1\leq j\leq N_2$}
    \vspace{.5em}
    \STATE{Compute the sample covariance $\widehat G$ of $g(\bfa_1),\ldots,g(\bfa_{N_1})$}
    \vspace{.5em}
    \RETURN{Top $r$ eigenvectors of $\widehat G$ corresponding to the largest eigenvalues.}
\end{algorithmic}
\end{algorithm}

\subsection{Derivation of Equation~\eqref{equ: stein discrepancy expression}}
\label{prf: stein discrepancy expression}

Let $\bfu_{\bfa}(\bfx)=\sum_{i=1}^d a_i \he_{m-1}(x_i) \bfe_i$ and
$\bfv_{\bfa}(\bfx)=R\tran \bfu_{\bfa}(R\bfx)=\sum_{i=1}^d a_i R_i\he_{m-1}(R_i\tran\bfx)$. Then we have
\begin{align*}
    \calS^\gamma_{R\# p}(\bfu_\bfa) = \EE[\gamma]{\calT_{R\# p} \bfu_{\bfa}(\bfx)} =\EE[\gamma]{\calT_p \bfv_{\bfa}(\bfx)} = \sum_{i=1}^d a_i \EE{R_i\tran h(\bfx)\cdot \he_{m-1}(R_i\tran\bfx) }.
\end{align*}
By Proposition~\ref{pro: contraction of A}, for a unit vector $\theta$, we have $\langle\calA^{(m)},\theta^{\otimes m}\rangle=\EE[\gamma]{\theta\tran h(\bfx)\cdot \he_{m-1}(\theta\tran\bfx)}$. Therefore,
\begin{align*}
\calS^\gamma_{R\# p}(\bfu_\bfa) = \sum_{i=1}^d a_i \langle \calA^{(m)}, R_i^{\otimes m} \rangle.
\end{align*}

\section{Additional discussion on alternative rotations}
\label{sec: cdr discussion}

In Section~\ref{sec-comparison-to-active-subspace}, we noted that the eigenvectors of the relative Fisher information matrix $\FI(\gamma,p)=\EE[\gamma]{h(\bfx)h(\bfx)\tran}$
do not, in general, recover the correct rotation for rotated product distributions. 
This issue can be addressed by using the centered covariance $\Cov[\gamma]{h(\bfx)}$.
However, even with this fix, there are still examples where it fails to identify a good rotation.

Consider the following 2-dimensional example:
\begin{align}\label{equ: pure interaction example}
r(\bfx)= \log \frac{p}{\gamma}(\bfx)=\beta \sin(x_1)\sin(2x_2) - C,
\end{align}
where $\beta>0$ is a constant and $C$ is the log-normalization constant.
The relative score is
\begin{align*}
    h(\bfx)= \nabla r(\bfx)=\Big(
    \beta \cos(x_1)\sin(2x_2),\,
    2\beta \sin(x_1)\cos(2x_2)
    \Big).
\end{align*}
Since
\begin{align*}
    \EE{\partial_{x_1} r(\bfx)\mid x_1}=0,\qquad 
    \EE{\partial_{x_2} r(\bfx)\mid x_2}=0,
\end{align*}
the projected FI is zero. Thus, in the original coordinates, $\gamma$ is already a stationary point of the MFVI objective, so MFVI will likely get stuck at $\gamma$.

On the other hand, the covariance matrix of the relative score is diagonal. Indeed, the off-diagonal entry of the covariance matrix is
\[
    \Cov{\partial_{x_1}r(\bfx), \partial_{x_2}r(\bfx)}=2\beta^2\Cov{\cos(x_1)\sin(2x_2), \sin(x_1)\cos(2x_2)}=0.
\]
The diagonal entries are $\beta^2\Var{\cos(x_1)\sin(2x_2)}$ and  $4\beta^2\Var{\sin(x_1)\cos(2x_2)}$, which are unequal when $\beta\neq 0$.
Therefore, selecting the rotation using the covariance of the relative score returns the original coordinates. In these coordinates the projected FI is zero, so MFVI will likely get stuck, as shown by the green dashed line in Figure~\ref{fig:pure_interaction_training_loss}.

\begin{figure}
    \centering
    \includegraphics[width=.5\textwidth]{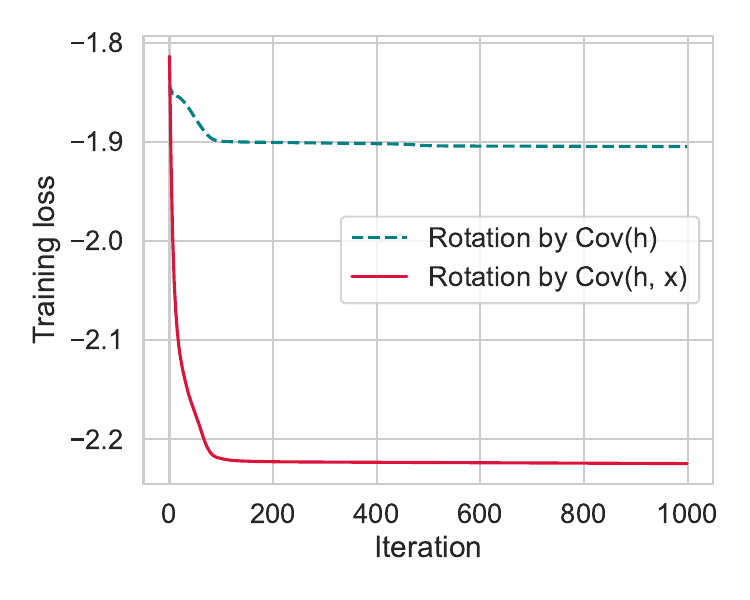}
    \caption{MFVI training loss vs iterations with different choices of rotation. The target distribution is the 2-dimensional distribution defined in \eqref{equ: pure interaction example}.}
    \label{fig:pure_interaction_training_loss}
\end{figure}

Now consider the $H$ matrix:
\[
    H=\Cov{\bfx, h(\bfx)}=
    \begin{pmatrix}
        0 & c\\
        c & 0
    \end{pmatrix}
    \quad\text{where }c=\EE{2\beta\cos(x_1) \cos(2x_2) }.
\]
If we rotate the coordinates by the orthogonal matrix
\[
    Q=\frac{1}{\sqrt{2}}\begin{pmatrix}1&1\\1&-1\end{pmatrix},
\]
the matrix $H$ transforms by conjugation as
\[
    H_{Q} = QH Q\tran
    =
    \begin{pmatrix}
        c & 0\\
        0 & -c
    \end{pmatrix},
\]
which is diagonal with distinct diagonal entries.
Therefore, the proposed relative score PCA method selects the rotation $Q$. 

The projected FI in the rotated coordinates is nonzero, since it is lower bounded by $H_{Q,11}^2+H_{Q,22}^2=2c^2$, which is positive.
Therefore, we expect MFVI in the rotated coordinates to strictly decrease the KL divergence, which is confirmed by the red solid line in Figure~\ref{fig:pure_interaction_training_loss}. 

Figure~\ref{fig:pure_interaction_histograms} compares the samples produced by MFVI under the two rotations. With the rotation selected by relative score PCA, the generated samples closely match the multimodal target distribution. In contrast, with the rotation selected by $\Cov[\gamma]{h(\bfx)}$, the generated samples remain close to the initial standard Gaussian distribution and fail to capture the target.

\begin{figure}
    \centering
    \includegraphics[width=.7\textwidth]{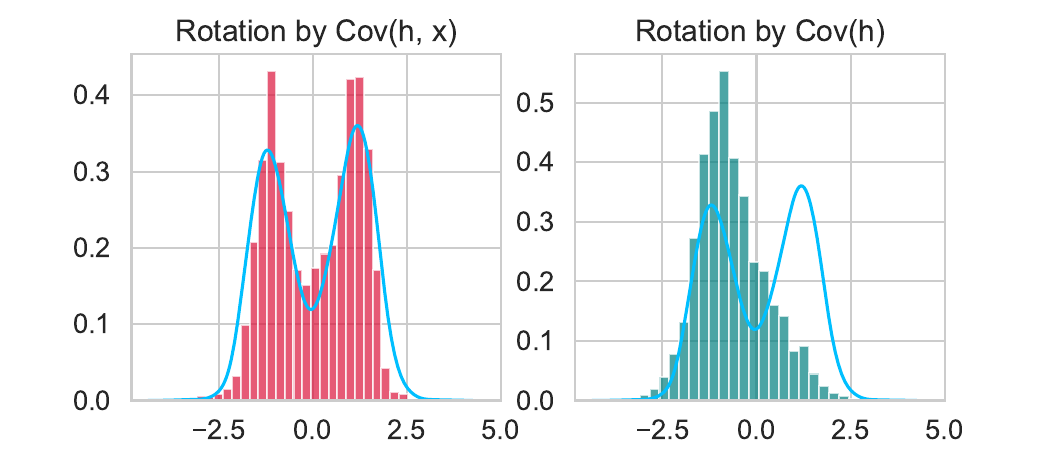}
    \caption{Histograms of MFVI samples projected onto $\frac{1}{\sqrt 2}(1,-1)\tran$ under different rotations. Left: rotation selected by the proposed relative score PCA. Right: rotation selected by the covariance of the relative score. The blue lines represent the marginal densities of the target, obtained by NUTS samples.}
    \label{fig:pure_interaction_histograms}
\end{figure}

\section{Additional experimental results}
\label{sec: numerical details}

\subsection{Additional results for Posteriordb benchmarks}
\label{app: posteriordb}

Table~\ref{tab: posteriordb 2} reports the ESS and KSD for the posteriordb experiments in Section~\ref{sec: posteriordb}.

\begin{table}
    
    \centering
    \begin{tabular}{lcc|cc}
    \toprule
     & \multicolumn{2}{c|}{ESS} & \multicolumn{2}{c}{KSD} \\
    & MF & PCA & MF & PCA \\
    \midrule
    M0 & 1455.5 (270.9) & \textbf{1941.7}\phantom{0} (35.0) & \phantom{0}1.269 (0.078) & \phantom{0}\textbf{0.409} (0.154) \\
    arK & \phantom{00}12.9\phantom{0} (11.5) & \phantom{0}\textbf{257.4} (127.4) & 20.983 (2.837) & \textbf{10.557} (1.075) \\
    garch & \phantom{0}106.3\phantom{0} (94.8) & \phantom{0}\textbf{422.8} (230.9) & \phantom{0}1.611 (0.088) & \phantom{0}\textbf{0.828} (0.104) \\
    gp-regr & \textbf{1931.8}\phantom{0} (15.9) & 1874.7 (150.1) & \phantom{0}\textbf{0.272} (0.079) & \phantom{0}0.286 (0.087) \\
    hmm & \phantom{0}158.0 (122.5) & \textbf{1501.5} (137.9) & \phantom{0}2.962 (0.972) & \phantom{0}\textbf{1.875} (0.785) \\
    kidscore-interaction & \phantom{000}\textbf{9.1}\phantom{00} (7.1) & \phantom{000}7.5\phantom{00} (4.7) & 71.362 (5.679) & \textbf{50.733} (5.515) \\
    mesquite & \phantom{000}7.5\phantom{00} (7.8) & \phantom{00}\textbf{62.6}\phantom{0} (32.8) & \phantom{0}5.212 (0.953) & \phantom{0}\textbf{4.671} (0.809) \\
    nes-logit & \phantom{00}90.3\phantom{0} (97.3) & \textbf{1630.2}\phantom{0} (33.8) & 15.934 (0.700) & \phantom{0}\textbf{2.199} (1.263) \\
    normal-mixture & \phantom{0}680.9 (322.6) & \textbf{1866.3}\phantom{0} (25.9) & 13.926 (0.752) & \phantom{0}\textbf{4.255} (1.738) \\
    radon & 1551.8 (200.4) & \textbf{1939.4}\phantom{0} (15.4) & 22.615 (1.702) & \phantom{0}\textbf{9.175} (2.992) \\
    sesame & \phantom{0}330.6 (166.0) & \textbf{1890.0}\phantom{0} (83.5) & \phantom{0}9.845 (0.609) & \phantom{0}\textbf{2.149} (0.616) \\
    wells & \phantom{00}56.0\phantom{0} (48.3) & \textbf{1610.4}\phantom{0} (42.4) & 18.410 (0.884) & \phantom{0}\textbf{2.971} (0.715) \\
    \bottomrule
    \end{tabular}
\caption{\label{tab: posteriordb 2} Average performance metrics for \texttt{posteriordb} experiments, with standard deviations over 20 independent replicates shown in parentheses. }
\end{table}

\subsection{Poisson GLMM}
\label{sec: glmm details}

We use the first approach introduced in~\citep{tan2021use} to reparametrize the posterior of the Poisson GLMM. The conditional distribution of $b_i\mid \theta_G, \bfy_i$ is proportional to
\begin{align*}
    p(b_i\mid \theta_G,\bfy_i) \propto \varphi(b_i;0,\sigma^2)\cdot \prod_{j=1}^m p(y_{ij} \mid \eta_{ij}), \quad \eta_{ij}=\beta_0 + \beta_1 x_{ij} + b_i.
\end{align*}
The log likelihood $\log p(y_{ij}\mid \eta_{ij}) = y_{ij} \eta_{ij} - e^{\eta_{ij}}$ is approximated by its second-order Taylor expansion at the MLE $\hat\eta^{\text{MLE}}$:
\begin{align*}
    \log p(y_{ij}\mid \eta_{ij}) \approx -\frac12 h_{ij} (\eta_{ij} - \hat\eta^{\text{MLE}}_{ij} )^2,\quad h_{ij}= e^{\hat\eta^{\text{MLE}}_{ij}},
\end{align*}
and $\log p(b_i\mid\theta_G, \bfy_i)$ is approximated by the quadratic function
\begin{align*}
    -\frac{1}{2\sigma^2}b_i^2 - \frac12 \sum_{j=1}^m h_{ij}(\beta_0 + \beta_1 x_{ij} + b_i - \hat\eta^{\text{MLE}}_{ij} )^2.
\end{align*}
The reparametrization $\tb_i=(b_i - \lambda_i)/L_i$ is determined by this Gaussian approximation. When $y_{ij}=0$, we follow~\citep{tan2021use} and define the MLE as $\hat\eta^{\text{MLE}}_{ij} = \psi(y_{ij}+0.5)$, where $\psi$ is the digamma function.

A mean-field Gaussian approximation is first fitted to the reparametrized posterior, and its mean and standard deviation are used to center the target. This initialization step runs for 200 Adam iterations with a learning rate of 0.1. Both Gaussian VI and MFVI+PCA are then optimized for 100 iterations using the same learning rate 0.1. Gradients are computed using 5000 Monte Carlo samples.

Table~\ref{table: glmm local} reports the MSEs in estimating the posterior means and posterior standard deviations for the local effects $b_i$.
\begin{table}
    \centering
    \begin{tabular}{lcc}
    \toprule
     & Gaussian VI & MFVI+PCA \\
    \midrule
    Mean & 0.0007 & 0.0004 \\
    Standard deviation & 0.0043 & 0.0032 \\
    \bottomrule
    \end{tabular}
    \caption{\label{table: glmm local} MSE in estimating the posterior mean and posterior standard deviations for the local effects $b_i$ in Poisson GLMM, averaged over $1\leq i\leq n$.}
\end{table}

\bibliographystyle{abbrvnat}
\bibliography{ref}
\end{document}